\documentclass[11pt]{article}
\usepackage[margin=1in]{geometry}
\usepackage{amsmath}
\usepackage{appendix}
\usepackage{array}
\usepackage{tabularx}  
\usepackage{amsmath,amsthm,amssymb,amsfonts}
\usepackage[american]{babel}
\usepackage[url=false,giveninits=true,maxbibnames=99,style=alphabetic,maxalphanames=5]{biblatex}
\usepackage{mathtools}
\usepackage{enumitem}
\usepackage{csquotes}
\usepackage[noend]{algpseudocode}
\usepackage{algorithm}
\usepackage{xparse}
\usepackage{xspace}
\usepackage{color}
\usepackage{caption}
\usepackage{subcaption}
\usepackage{fullpage}
\usepackage{placeins}
\usepackage{xcolor}
\usepackage{makecell}
\usepackage{qcircuit}
\usepackage{thmtools}
\usepackage{thm-restate}
\usepackage[textsize=scriptsize]{todonotes}
\usepackage{verbatim}
\usepackage{qcircuit}
\usepackage{comment}
\usepackage{mathdots}
\usepackage{hyperref}
\usepackage{cleveref}

\setuptodonotes{color=blue!15}

\newcommand{\Luke}{\color{blue}}

\mathchardef\mhyphen="2D 

\newcommand\newmathabbrev[2]{\newcommand{#1}{\ensuremath{#2}\xspace}}

\newcommand\cfont\mathsf
\newmathabbrev\p{\cfont{P}}
\newmathabbrev{\N}{\mathbb N}
\newmathabbrev\NP{\cfont{NP}}
\newmathabbrev\QPH{\cfont{QPH}}
\newmathabbrev\QPHpure{\cfont{pureQPH}}
\newmathabbrev\QCPH{\cfont{QCPH}}
\newmathabbrev\polyQCPH{\cfont{polyQCPH}}
\newmathabbrev\polyPH{\cfont{polyPH}}
\newmathabbrev\QEPH{\cfont{QEPH}}
\newmathabbrev\QMAH{\cfont{QMAH}}
\newmathabbrev\QAC{\cfont{QAC}_0}
\newmathabbrev\AC{\cfont{AC}}
\newmathabbrev\disagr{\mathsf{disagr}}

\newmathabbrev\SIP{\cfont{SIPSER}}

\newmathabbrev\DTIME{\cfont{DTIME}}
\newmathabbrev\tSAT{3\cfont{\mhyphen{}SAT}}
\newmathabbrev\MA{\cfont{MA}}
\newmathabbrev\AM{\cfont{AM}}
\newmathabbrev\NPDAG{\cfont{NP\mhyphen{}DAG}}
\newmathabbrev\QMADAG{\cfont{QMA\mhyphen{}DAG}}
\newmathabbrev\yes{\mathrm{yes}}
\newmathabbrev\no{\mathrm{no}}
\newmathabbrev\US{\cfont{US}}
\newmathabbrev\FP{\cfont{FP}}
\newmathabbrev\PP{\cfont{PP}}
\newmathabbrev\CeP{\cfont{C_=P}}
\newmathabbrev\coCeP{\cfont{coC_=P}}
\newmathabbrev\PH{\cfont{PH}}
\newmathabbrev\SAT{\cfont{SAT}}
\newmathabbrev\SPP{\cfont{SPP}}
\newmathabbrev\GapP{\cfont{GapP}}
\newmathabbrev\BQP{\cfont{BQP}}
\newmathabbrev\ZQEXP{\cfont{ZQEXP}}
\newmathabbrev\QP{\cfont{QP}}
\newmathabbrev\StoqMA{\cfont{StoqMA}}
\newmathabbrev\coNP{\cfont{coNP}}
\newmathabbrev\AzPP{\cfont{A_0PP}}
\newmathabbrev\QMA{\cfont{QMA}}
\newmathabbrev\QXC{\cfont{QXC}}
\newmathabbrev\QMAone{\cfont{QMA}_1}
\newmathabbrev\uQMA{\cfont{uniqueQMA}}
\newmathabbrev\uSAT{\cfont{uniqueSAT}}
\newmathabbrev\cQMA{\cfont{cloneableQMA}}
\newmathabbrev\coQMA{\cfont{coQMA}}
\newmathabbrev\BPP{\cfont{BPP}}
\newmathabbrev\QCMA{\cfont{QCMA}}
\newmathabbrev\pNPlog{\p^{\NP[\log]}}
\newmathabbrev\pNP{\p^{\NP}}
\newmathabbrev\pNPtwo{\p^{\NP[2]}}
\newmathabbrev\pNPone{\p^{\NP[1]}}
\newmathabbrev\pParSAT{\p^{||\SAT}}
\newmathabbrev\pQMApar{\p^{||\QMA}}
\newmathabbrev\pCpar{\p^{||\C}}
\newmathabbrev\pStoqMApar{\p^{||\StoqMA}}
\newmathabbrev\pQMAlog{\p^{\QMA[\log]}}
\newmathabbrev\pClog{\p^{\textup{C}[\log]}}
\newmathabbrev\pC{\p^{\textup{C}}}
\newmathabbrev\QMASPACE{\cfont{QMASPACE}}
\newmathabbrev\pQMAtlog{\p^{\QMA(2)[\log]}}
\newmathabbrev\pStoqMAlog{\p^{\StoqMA[\log]}}
\newmathabbrev\pQMApt{\p^{\Vert\QMA(2)}}
\newmathabbrev\pQMA{\p^{\QMA}}
\newmathabbrev\SharpP{\cfont{\#P}}
\newmathabbrev\pSharP{\p^{\SharpP[1]}}
\newmathabbrev\PromisePP{\cfont{PromisePP}}
\newmathabbrev\lett{\le_\mathrm{tt}}
\newmathabbrev\YES{\mathsf{YES}}
\newmathabbrev\NO{\mathsf{NO}}
\newmathabbrev\PSPACE{\cfont{PSPACE}}
\newmathabbrev\IP{\cfont{IP}}
\newmathabbrev\POLY{\cfont{POLY}}
\newmathabbrev\DAG{\cfont{DAG}}
\newmathabbrev\StoqMADAG{\StoqMA\mhyphen\cfont{DAG}}
\newmathabbrev\CDAG{C\mhyphen\cfont{DAG}}
\newmathabbrev\CDAGf{C\mhyphen\cfont{DAG}_f}
\newmathabbrev\CDAGs{C\mhyphen\cfont{DAG}_s}
\newmathabbrev\CDAGd{C\mhyphen\cfont{DAG}_{d}}
\newmathabbrev\CDAGo{C\mhyphen\cfont{DAG}_1}
\newmathabbrev\LOGS{\cfont{LOGS}}
\newmathabbrev\TAUT{\cfont{TAUTOLOGY}}
\newmathabbrev\SBQP{\cfont{SBQP}}
\newmathabbrev\Fc{F_\coNP}
\newmathabbrev\Fa{F_\AzPP}
\newmathabbrev\GSCON{\cfont{GSCON}}
\newmathabbrev\GSCONexp{\GSCON_\cfont{exp}}
\newmathabbrev\QMAexp{\QMA_\cfont{exp}}
\newmathabbrev\UQMA{\cfont{UQMA}}
\newmathabbrev\R{\mathbb R}
\newmathabbrev\Trees{\cfont{TREES}}
\newmathabbrev\apxsim{\cfont{APX\mhyphen{}SIM}}
\newmathabbrev\AWPP{\cfont{AWPP}}
\newmathabbrev\X{\mathcal{X}}
\newmathabbrev\Y{\mathcal{Y}}

\newmathabbrev\Z{\mathcal{Z}}

\newmathabbrev\ZZ{\mathbb{Z}}
\newmathabbrev\Hprop{H_\mathrm{prop}}
\newmathabbrev\Hin{H_\mathrm{in}}
\newmathabbrev\Hout{H_\mathrm{out}}
\newmathabbrev\Hstab{H_\mathrm{stab}}
\newmathabbrev\Lext{\L_\mathrm{ext}}
\newmathabbrev\BTWNP{\cfont{BTW}(\NP)}
\newmathabbrev\BSN{\cfont{BSN}}
\newmathabbrev\SN{\cfont{SN}}
\newmathabbrev\BD{\cfont{BD}}
\newmathabbrev\HYPERTREE{\cfont{NP\mhyphen{}HYPERTREE}}
\newmathabbrev\Hext{H_\mathrm{ext}}
\newmathabbrev\Hpropt{\tilde{H}_\mathrm{prop}}
\newmathabbrev\Hint{\tilde{H}_\mathrm{in}}
\newmathabbrev\Houtt{\tilde H_\mathrm{out}}
\newmathabbrev\EXP{\cfont{EXP}}
\newmathabbrev\A{\mathcal{A}}
\newmathabbrev\U{\mathcal{U}}

\renewcommand\L{\mathcal{L}}

\newmathabbrev\DAGSSAT{\DAGS(\SAT)}
\newmathabbrev\DAGS{\mathrm{DAGS}}
\newmathabbrev\DAGSNP{\DAGS(\NP)}

\newmathabbrev\AND{\cfont{AND}}

\newmathabbrev\STCONN{{S,T}\cfont{\mhyphen{}CONN}}
\newmathabbrev\CNF{\cfont{CNF}}
\newmathabbrev\NEXP{\cfont{NEXP}}
\newmathabbrev\NPSPACE{\cfont{NPSPACE}}
\newmathabbrev\QCMASPACE{\cfont{QCMASPACE}}
\newmathabbrev\BQPSPACE{\cfont{BQPSPACE}}
\newmathabbrev\PQPSPACE{\cfont{PQPSPACE}}
\newmathabbrev\PQP{\cfont{PQP}}
\newmathabbrev\TQBF{\cfont{TQBF}}
\newmathabbrev{\PCP}{\cfont{PCP}}
\newmathabbrev\BQUPSPACE{\cfont{BQ_UPSPACE}}
\newmathabbrev\QMAt{\QMA(2)}
\newmathabbrev\QMAtSEP{\QMA^{\mathsf{SEP}}(2)}
\newmathabbrev\QMAtexp{\QMAt_{\exp}}
\newmathabbrev\MIP{\cfont{MIP}}
\newmathabbrev\MIPt{\MIP(2)}
\newmathabbrev\BellQMA{\cfont{BellQMA}}
\newmathabbrev\BellQMAt{\BellQMA(2)}
\newmathabbrev\BellQMAtexp{\BellQMAt_{\exp}}

\protected\def\verythinspace{%
  \ifmmode
    \mskip0.5\thinmuskip
  \else
    \ifhmode
      \kern0.08334em
    \fi
  \fi
}

\newcommand{\C}{\mathbb C}

\newcommand{\be}{\begin{equation}}
\newcommand{\ee}{\end{equation}}

\renewcommand{\epsilon}{\varepsilon}

\DeclareMathOperator{\Tr}{Tr}

\DeclarePairedDelimiter\bra{\langle}{\rvert}
\DeclarePairedDelimiter\ket{\lvert}{\rangle}

\DeclarePairedDelimiterX\braket[2]{\langle}{\rangle}{#1 \delimsize\vert #2}
\DeclarePairedDelimiterX\ketbra[2]{\lvert}{\rvert}{#1 \delimsize\rangle\delimsize\langle #2}

\setlist[itemize]{noitemsep, topsep=0pt}
\setlist[enumerate]{noitemsep, topsep=0pt}

\declaretheorem[numberwithin=section]{theorem}

\declaretheorem[sibling=theorem]{corollary}
\declaretheorem[sibling=theorem]{lemma}

\declaretheorem[sibling=theorem]{fact}

\crefname{observation}{observation}{observations}
\Crefname{observation}{Observation}{Observations}

\makeatletter
\newcommand{\subalign}[1]{%
  \vcenter{%
    \Let@ \restore@math@cr \default@tag
    \baselineskip\fontdimen10 \scriptfont\tw@
    \advance\baselineskip\fontdimen12 \scriptfont\tw@
    \lineskip\thr@@\fontdimen8 \scriptfont\thr@@
    \lineskiplimit\lineskip
    \ialign{\hfil$\m@th\scriptstyle##$&$\m@th\scriptstyle{}##$\hfil\crcr
      #1\crcr
    }%
  }%
}
\NewDocumentCommand{\LeftComment}{s m}{%
  \Statex \IfBooleanF{#1}{\hspace*{\ALG@thistlm}}\(\triangleright\) #2}

\def\moverlay{\mathpalette\mov@rlay}
\def\mov@rlay#1#2{\leavevmode\vtop{%
   \baselineskip\z@skip \lineskiplimit-\maxdimen
   \ialign{\hfil$\m@th#1##$\hfil\cr#2\crcr}}}
\newcommand{\charfusion}[3][\mathord]{
    #1{\ifx#1\mathop\vphantom{#2}\fi
        \mathpalette\mov@rlay{#2\cr#3}
      }
    \ifx#1\mathop\expandafter\displaylimits\fi}

\makeatother

\algnewcommand{\LineComment}[1]{\State \(\triangleright\) #1}

\algblockdefx[ON]{Blk}{EndBlk}[1]
  {#1}
  {}

\makeatletter
\ifthenelse{\equal{\ALG@noend}{t}}%
  {\algtext*{EndBlk}}
  {}%
\makeatother

\AtEveryBibitem{%
  \clearlist{language}%
}

\def\({\left(}
\def\){\right)}
\def\X{\mathcal{X}}
\def\Y{\mathcal{Y}}
\def\Z{\mathcal{Z}}

\def\yes{\text{yes}}
\def\no{\text{no}}

\usepackage{authblk}
\usepackage{longtable} 
\usepackage{booktabs}  
\usepackage{seqsplit}
\usepackage{comment}

\usepackage{ytableau}

\makeatletter
\renewcommand\@makefnmark{\hbox{\textsuperscript{\@thefnmark}}}
\makeatother

\usepackage{amssymb}
\usepackage{tikz}
\usetikzlibrary{calc}
\usetikzlibrary{positioning}
\usepackage{hyperref}
\usepackage{xurl}

\hypersetup{
colorlinks = true,
citecolor= blue,
linkcolor= blue,
breaklinks=true
}
\title{\textsf{Enhanced quantum capacity thresholds from symmetry}}

\date{}

\author[1,2]{Avantika Agarwal}
\author[1,2,3]{Amolak Ratan Kalra}
\author[1,4]{Sungjai Lee}
\author[1,3,5]{Debbie Leung}
\author[1,2]{\\Luke Schaeffer}
\author[1,2]{Pulkit Sinha}
\author[1,4]{Graeme Smith}
\affil[1]{Institute for Quantum Computing, University of Waterloo}
\affil[2]{David R. Cheriton School of Computer Science, University of Waterloo}
\affil[3]{Perimeter Institute for Theoretical Physics}
\affil[4]{Department of Applied Mathematics, University of Waterloo}
\affil[5]{Department of Combinatorics and Optimization, University of Waterloo}

\begin{document}

\maketitle

\begin{abstract}

The quantum capacity captures the value of a quantum channel for transmitting quantum information, establishing the fundamental limits on quantum communication. In spite of its central role in quantum information theory, the quantum capacity of most channels is unknown, with wide gaps between the best upper and lower bounds.  Even deciding whether a channel has nonzero capacity---finding its capacity threshold---is difficult.  In this paper we report significant increases in the capacity thresholds of two prototypical noise models: the depolarizing channel and Pauli channels.  In the case of the depolarizing channel, this is the first improvement in 18 years, giving a bigger increase beyond the hashing bound than all previous improvements combined.  Our starting point is the representation theoretic framework recently proposed by Bhalerao and Leditzky (2025) to compute coherent information for special permutation invariant states.  We generalize their framework to the full symmetric subspace, which allow us to optimize coherent information over rank two states in that space. A representation theoretic calculation shows that exponentially many Kraus operators of the channel annihilate the symmetric space, corresponding to a massive decrease in environment entropy for states on the symmetric space compared to the maximally mixed state. This explains the enhanced coherent information as a manifestation of degeneracy for the resulting codes.

\end{abstract}
\section{Introduction}
\label{sec:introduction}
The quantum capacity of a noisy quantum channel is the maximum rate at which reliable quantum communication is possible over the channel. Measured in qubits per channel use, it captures the optimal performance of a quantum error correcting code adapted to the channel's noise model. For the most part, we don't yet have the tools necessary to evaluate quantum capacity.  For example, while the qubit depolarizing channel is one of the simplest and most natural quantum channels, evaluating its quantum capacity is a major open problem in quantum information theory. 

In the classical setting the analogous question is well understood. Shannon \cite{shannon} in 1948 showed that the classical capacity of any channel is equal to the mutual information between the input and output when maximized over the input distribution. A great simplifying feature of the classical problem is that the maximum mutual information between input and output when optimized over $n$ copies of the channel is equal to $n$ times the optimal value for a single copy of the channel. This property is called \emph{additivity} of the maximum mutual information for a channel.

Surprisingly, this is not true for the quantum analog (of maximum mutual information) called coherent information. In fact, the coherent information of $n$ copies of the channel can be strictly greater than $n$ times the coherent information of a single copy of the channel \cite{bennett1996mixed,divincenzo1998quantum,shor1996quantumerrorcorrectingcodesneed}. 
This lack of additivity of coherent information results in an unbounded optimization formula over the length of the input for the quantum capacity. This complicates the analysis significantly and much work has been done to remedy this situation \cite{Schumacher96,SchumacherNielsen96,lloyd1997capacity,cerf1998quantum,PhysRevA.57.4153,850671,smith2007degenerate,smith2008quantum,gsquantum, smith2011quantum, brandao2012does, cubitt2015unbounded, jackson2017degenerate, leditzky2018dephrasure, bausch2020quantum, bausch2021error, leditzky2023generic, agarwal2026error}. In spite of this, quantum capacity still remains poorly understood, even for simple channels such as the depolarizing channel.

In the quantum case, for a single copy of the channel, the coherent information can be achieved by using a random code \cite{bennett1996mixed,lloyd1997capacity,shor2002quantum,devetak2005private}\footnote{The coherent information is sometimes called the hashing bound, because random coding generalizes random hashing.}. For the depolarizing channel, a series of papers \cite{divincenzo1998quantum,shor1996quantumerrorcorrectingcodesneed,smith2007degenerate,fern} showed many examples of explicit encodings that, when concatenated with a random code, can give higher rate than a random code alone. These results also show that such encodings can achieve non-zero communication rate at higher error probabilities than those achievable by a random code. These encodings correspond to specific $n$ copy input states whose coherent information is achievable by concatenation with random coding\cite{bennett1996mixed,lloyd1997capacity,shor2002quantum,devetak2005private}.  This naturally motivates the following question: what is the best input state which gives non-zero capacity through the depolarizing channel at the highest possible error probability $p$? Finding the highest $p$ at which positive rate is possible is called the threshold problem. While previous papers established increasingly better lower bounds on $p$, there has been no improvement since the result of Fern and Whaley \cite{fern} in 2008, which gave a lower bound of $p = 0.06376$. Meanwhile, an upper bound of $p=0.08333$ can be established using a no-cloning argument \cite{bruss1998optimal}. In this paper, we give an improved lower bound on the error threshold of the depolarizing channel using input states which are maximally mixed on $2$-dimensional subspaces of the symmetric space of $n$ qubits.

\subsection{Previous Work}
In this section, we summarize the previous results for depolarizing and Pauli channels. We use the probability vector $(p_I, p_X, p_Y, p_Z)$ to denote an arbitrary mixed Pauli channel $\Phi$ with the action $\Phi(\rho) = p_I \rho + p_X X \rho X + p_Y Y \rho Y + p_Z Z \rho Z$.

Bennett, DiVincenzo, Smolin and Wootters \cite{bennett1996mixed} showed that using a random stabilizer code can attain positive rate for the $(1-3p, p, p, p)$ depolarizing channel up to $p = 0.063096$ \footnote{Note that sometimes the depolarizing channel is taken to have probabilities $(1-p,p/3,p/3,p/3)$, resulting in a difference in a factor of 3 of reported numbers.}. In a series of papers \cite{divincenzo1998quantum,shor1996quantumerrorcorrectingcodesneed,smith2007degenerate}, combinations of phase and bit-flip repetition codes (followed by a random stabilizer code) were used to improve the threshold. Specifically, Shor and Smolin \cite{shor1996quantumerrorcorrectingcodesneed} (concurrently with \cite{bennett1996mixed}) showed that by encoding with the $5$-qubit phase flip repetition code gives positive rate up to $p = 0.063452$. DiVincenzo, Shor and Smolin \cite{divincenzo1998quantum} showed that by encoding with the $5$-qubit phase flip repetition code followed by the $5$-qubit bit flip repetition code gives positive rate up to $p = 0.06352$. These repetition codes were thought to outperform random stabilizer codes because of degeneracy \cite{shor1996quantumerrorcorrectingcodesneed}. Based on this intuition, Smith and Smolin \cite{smith2007degenerate} proposed the use of ``grossly degenerate codes'' (longer length concatenated repetition codes) which map a large number of typical errors to the same syndrome, to get positive rate up to $p = 0.063626$. Soon after this, Fern and Whaley \cite{fern} gave improved lower bounds by encoding with concatenated repetition codes of significantly larger lengths, followed by $10$ layers of the $[[5,1,3]]$ code (number of qubits $\approx 5^{13}$), to get the threshold $p = 0.06376$. In this case, the threshold was estimated using a carefully crafted Monte-Carlo simulation with a runtime on the order of months. \footnote{All of these results were framed in terms of a small code concatentated with a random code.  The corresponding $n$ copy input state is the maximally mixed state on the code.}  Since their result, no further improvements on the lower bound on threshold of the depolarizing channel have been obtained.

Smith and Smolin \cite{smith2007degenerate} initiated the study of thresholds for two other Pauli channels: the independent X-Z channel with parameters $((1-p)^2, p(1-p), p^2, p(1-p))$ and the 2-Pauli channel with parameters $(1-2p, p, 0, p)$. They drew particular attention to the 2-Pauli channel since it appeared to admit no enhancement beyond random coding.  Fern and Whaley \cite{fern} used encodings similar to the ones for the depolarizing channel to give improved lower bounds on the thresholds for both these channels: $p = 0.11398$ for 2-Pauli channel and $p = 0.11277$ for independent X-Z channel. Very recently, Bhalerao and Leditzky proposed a new framework using representation theory of the symmetric group to compute the threshold for permutation invariant codes. They were able to give improved lower bounds at encoding lengths as small as $n = 18$ and $n = 24$ for the independent X-Z and 2-Pauli channels respectively.

Permutation invariant codes have also been explored previously in the context of error correction \cite{pollatsek2004permutationally,ouyang2026theory}. Very recently, permutation invariant operators were used in the context of entanglement fidelity and superactivation \cite{bergh2026permutationinvariantoptimizationproblems,parentin2026onsetsuperactivationquantumcapacity}.

\subsection{Our Results}
We give improved lower bounds on the thresholds for all the three types of Pauli channels. Our result is the first improvement on the threshold of the depolarizing channel since the result of Fern and Whaley \cite{fern} in 2008. We obtain a lower bound of $p = 0.064657$ on the threshold of the depolarizing channel at input length $n = 45$. We start getting improvements at lengths as small as $n = 24$, which is on the order of $10^{-8}$ compared to the block length required for \cite{fern}. For the independent X-Z and 2-Pauli channels we give lower bounds of $p = 0.118371$ and $p = 0.118067$ respectively on the threshold, both at input length $n = 30$. Note that we start getting improved thresholds at $n = 12$ and $n=15$, respectively.

The basis of these improvements is a generalization of the representation theoretic framework of \cite{bhalerao2025improvingquantumcommunicationrates} to include all states in the symmetric subspace, in addition to convex combinations of tensor-power states $\ket{\psi}\bra{\psi}^{\otimes n}$ considered in their paper. Lastly in Section \ref{sec:degen}, we outline a conceptual reason that explains the good performance of permutation invariant codes that we consider. In particular, we prove in Theorem \ref{thm:annihilate} and Theorem \ref{thm:rank} that for an input $\rho$ to $n$ copies of the depolarizing channel from the symmetric subspace, when we consider the CP map obtained by removing all the non-typical Pauli Kraus operators, the rank of the corresponding complementary map is exponentially smaller than the number of typical Pauli Kraus operators in the limit $n \rightarrow \infty$.  While the $n$ appearing in our optimizations are not enormous, we believe that this asymptotic counting captures  a key feature of how our symmetric codes harness degeneracy to enhance capacity thresholds.

\section{Background}
\label{sectionbackground}
\subsection{Quantum Capacity}
Given a channel $\mathcal{N}$ mapping from $A$ to $B$, and an input $\rho$, the \emph{coherent information} of $\mathcal{N}$ acting on the input $\rho$ is defined as 
\begin{align}\label{eq:channelcohinfo}
    I_c(\mathcal{N}, \rho) = S(\mathcal{N}(\rho)) - S(I \otimes \mathcal{N}(\ket{\psi}\bra{\psi}_{RA}))
\end{align}
where $\ket{\psi}_{RA}$ is a purification of $\rho$ and $S$ is the von Neumann entropy.  Note that the above expression is the same for all purifications of $\rho$.  
References \cite{lloyd1997capacity,shor2002quantum,devetak2005private} established that the above expression is an \emph{achievable rate} for the channel $\mathcal{N}$.  The number $r$ is an achievable rate if there is a family of error correcting codes, one for each block length $t$,   transmitting $rt$ qubits with error vanishing in $t$.  
The \emph{coherent information for $\mathcal{N}$} is given by 
\begin{align*}
    \mathcal{Q}^{(1)}(\mathcal{N}) = \max_\rho I_c(\mathcal{N},\rho).
\end{align*}
The superscript $1$ refers to the tensor power of $\mathcal{N}$ in the above expression. The coherent information for $\mathcal{N}$ is also called the $1$-shot coherent information for $\mathcal{N}$ or the single-letter coherent information for $\mathcal{N}$.  
The $n$-shot coherent information of $\mathcal{N}$ is 
\begin{align*}
    \mathcal{Q}^{(n)}(\mathcal{N}) = \frac{1}{n} \mathcal{Q}^{(1)}(\mathcal{N}^{\otimes n}) .
\end{align*}

The quantum capacity of $\mathcal{N}$ is defined as the supremum of all achievable rates of $\mathcal{N}$.  By \cite{lloyd1997capacity,shor2002quantum,devetak2005private} and \cite{PhysRevA.57.4153}, it is equal to the \emph{regularized} coherent information of $\mathcal{N}$,
\begin{align}
    \mathcal{Q}(\mathcal{N}) = \lim_{n\rightarrow \infty} \mathcal{Q}^{(n)}(\mathcal{N}). 
\end{align}
Following the above discussion, for any joint input $\rho$ on $A_1, \cdots, A_n$, 
the expression $\frac{1}{n} I_c(\mathcal{N}^{\otimes n},\rho)$ is an achievable rate of $\mathcal{N}$.  Since we are mostly concerned with whether the capacity is zero or positive, we can focus on the rescaled coherent information:  
\begin{align}\label{eq:quantumcapacity}
    I_c(\mathcal{N}^{\otimes n},\rho) = S(\mathcal{N}^{\otimes n}(\rho)) - S(I \otimes \mathcal{N}^{\otimes n}(\ket{\psi}\bra{\psi}_{RA_1 \cdots A_n}))
\end{align}
where $\ket{\psi}$ purifies the entire $\rho$ on a single system $R$ of appropriate dimension.  

\subsection{Representation Theory}
We start by reviewing some relevant representation theory background. See \cite{Ser97} for a more detailed exposition on representation theory. We will be interested in the following actions of the symmetric group $S_n$ and unitary group $U(d)$ on $(\mathbb{C}^d)^{\otimes n}$:
\begin{align*}
    \forall~\pi \in S_n &~~~~~~~~~~\pi \cdot (\ket{i_1, \ldots, i_n}) = U_\pi  \ket{i_1, \ldots, i_n} =\ket{i_{\pi(1)}, \ldots, i_{\pi(n)}} \\
    \forall~U \in U(d) &~~~~~~~~~~U \cdot (\ket{i_1, \ldots, i_n}) = U^{\otimes n}\ket{i_1, \ldots, i_n}
\end{align*}
We will denote the actions of $\pi \in S_n$ and $U \in U(d)$ as $U_\pi$ and $U^{\otimes n}$ respectively. The notation $U_\pi$ is chosen to reflect the fact that the action of $S_n$ is unitary. Note that the two actions commute, $M^{\otimes n} \cdot U_\pi = U_\pi \cdot M^{\otimes n}$ for all $\pi \in S_n, M \in U(d)$. The commutant of the action of $S_n$ on $(\mathbb{C}^d)^{\otimes n}$, denoted $\mathcal C(S_n)$ is the set of all operators $M \in L((\mathbb{C}^d)^{\otimes n})$ such that $M \cdot U_\pi = U_\pi \cdot M$ for all $\pi \in S_n$. This means that the unitary group $U(d)$ with its action on $(\mathbb{C}^d)^{\otimes n}$ as defined above lies in the commutant $\mathcal{C}(S_n)$ of $S_n$. Similarly, the action of $S_n$ lies in the commutant $\mathcal{C}(U(d))$ of the action of $U(d)$.

Moreover, Schur-Weyl duality \cite{fulton2013representation, etingof2011introduction} tells us that the action of $U(d)$ spans the commutant $\mathcal{C}(S_n)$ of $S_n$ and the action of $S_n$ spans the commutant $\mathcal{C}(U(d))$ of $U(d)$. In particular,
\begin{align*}
    \forall~M \in \mathcal{C}(S_n) &~~~~~~~~~~ M = \sum_{U \in U(d)} \alpha_U U^{\otimes n} \\
    \forall~M \in \mathcal{C}(U(d)) &~~~~~~~~~~ M = \sum_{\pi \in S_n} \alpha_\pi U_{\pi}
\end{align*}
Furthermore, we also have the following decomposition of $(\mathbb{C}^d)^{\otimes n}$:
\begin{align*}
    (\mathbb{C}^d)^{\otimes n} \cong \bigoplus_{\lambda\vdash_d\,n}~S_{\lambda} \otimes V_{\lambda}^d
\end{align*}
where $S_{\lambda}$ is an irrep of $S_n$ (called Specht modules) and $V^d_{\lambda}$ is an irrep of $U(d)$ (in fact, it is a polynomial irrep of $GL(d)$).  For simplicity of notation, we will drop the isomorphism $\cong$ symbol, and presume that these irreps are indeed the invariant subspaces inside $(\mathbb{C}^d)^{\otimes n}$.\footnote{In some other quantum computing literature using Schur-Weyl duality \cite{wright2016learn, harrow2005applications}, a common notation is to use the unitary operation called the Schur transform, to map an orthonormal basis for these invariant subspaces to the computational basis. On the other hand, we are mostly following the notation in \cite{bhalerao2025improvingquantumcommunicationrates}.} Since the defined actions (i.e, both $U_\pi$ and $U^{\otimes n}$) are all unitary, we further have that these invariant subspaces are orthogonal. 

Both the irreps are indexed by $\lambda$ which is a partition of $n$ into at most $d$ parts. We will represent any such partition as $\lambda = (\lambda_1, \ldots, \lambda_d)$ where $\lambda_i \geq 0$ for all $i \in [d]$, $\sum_{i=1}^d\lambda_i = n$ and $\lambda_1 \geq \ldots \lambda_d$. These partitions can be represented using a Young Diagram, we give a few examples here:
\[
\ydiagram{3,2}
\qquad
\ydiagram{5,3}
\qquad
\ydiagram{2,2}
\]
Since the actions of both $U(d)$ and $S_n$ are unitary, the subspaces isomorphic to these irreps are also orthogonal. We have the following decomposition of the action of $S_n$ and $U(d)$ on $(\mathbb{C}^d)^{\otimes n}$ as actions on the irreps:
\begin{align*}
    \forall~\pi \in S_n &~~~~~~~~~~ U_{\pi} = \bigoplus_{\lambda\vdash_d\,n}~p_{\lambda}(\pi) \otimes \mathbb{I}_{V^d_{\lambda}} \\
    \forall~U \in U(d) &~~~~~~~~~~ U^{\otimes n} = \bigoplus_{\lambda\vdash_d\,n}~\mathbb{I}_{S_{\lambda}} \otimes q^d_{\lambda}(U)
\end{align*}

i.e., on $S_\lambda$, $\pi$ acts as $p_\lambda(\pi)$ and on $V^d_\lambda$, $U$ acts as $q^d_\lambda(U)$.  In fact, Schur-Weyl duality also holds when considering $GL(d)$ with its action defined similarly to $U(d)$, and then $q_\lambda^d$ can further be extended to linear operators $L(\mathbb{C}^d)$ by continuity.

It is possible to parametrize the basis elements of $S_{\lambda}$ using Standard Young Tableaux (SYTs), and thus we can compute the dimension of $S_{\lambda}$ by counting the number of SYTs. Given a Young diagram $\lambda \vdash_d\, n$, we obtain SYTs by filling in the diagram with distinct numbers from $\{1, \ldots, n\}$ in increasing order from left to right and top to bottom, for example:
\[
\begin{ytableau}
1 & 2 & 3 & 4 & 5
\end{ytableau}
\qquad
\begin{ytableau}
1 & 2 & 3 \\
4 & 5
\end{ytableau}
\qquad
\begin{ytableau}
1 & 3 & 4 \\
2 & 5
\end{ytableau}
\]
The number of SYTs for a given $\lambda$ is given by the Hook's length formula,
\begin{align*}
    \mathsf{dim}(S_{\lambda}) = \frac{n!}{\prod_{\square \in \lambda} h_{\square}}
\end{align*}
where $h_{\square}$ is the number of boxes in the hook of $\square$, that is, the number of boxes in $\lambda$ to the right and below $\square$ (including $\square$). Using the SYTs, it is possible to obtain an orthogonal basis for $S_{\lambda}$ called the Young-Yamanouchi basis \cite{vershik2005new}.

It is possible to parametrize the basis elements of $V_{\lambda}^d$ using Semi-Standard Young Tableaux (SSYTs) with entries in $\{1,2\dots d\}$, and thus we can compute the dimension of $V^d_{\lambda}$ by counting the number of such SSYTs. Given a Young diagram $\lambda \vdash_d\, n$, we obtain these SSYTs by filling in the diagram with numbers from $\{1, \ldots, d\}$ in non-decreasing order from left to right and in increasing order from top to bottom, for example:
\[
\begin{ytableau}
1 & 1 & 1 & 2 & 2
\end{ytableau}
\qquad
\begin{ytableau}
1 & 1 & 2 \\
2 & 3
\end{ytableau}
\qquad
\begin{ytableau}
1 & 2 & 2 \\
3 & 4
\end{ytableau}
\]
The number of SSYTs with entries in $\{1,2\dots d\}$ is given by the Weyl's dimension formula \cite{christandl2006structurebipartitequantumstates},
\begin{align*}
    \mathsf{dim}(V^d_{\lambda}) = \prod_{1 \leq i < j \leq d}\frac{\lambda_i - \lambda_j + j - i}{j - i}
\end{align*}
Using the SSYTs, it is possible to obtain an orthogonal basis for $S_{\lambda}$ called the Gelfand-Tsetlin basis \cite{Mol06} which we will use to give an explicit matrix form for the irreps as well as for establishing degeneracy in Section \ref{sec:degen}.

Moreover, for any $M \in \mathcal{C}(S_n)=\mathrm{span}(U^{\otimes n})$, this decomposition tells us that 
\begin{align*}
    M = \bigoplus_{\lambda\vdash_d\,n}~\mathbb{I}_{S_{\lambda}} \otimes M_{V_\lambda},
\end{align*}
for some $M_{V_\lambda}\in L(V^d_\lambda)$ for each $\lambda\vdash_d\,n$.

We will be interested in the irreps of $U(d)$ for $d = 2$, since we will be working with channels that have qubit inputs and outputs. For this specific case, an explicit formula for the irreps $(q^2_{\lambda}, V^2_{\lambda})$ was computed in \cite{bhalerao2025improvingquantumcommunicationrates}, which we restate here. Specifically, consider an invertible $2 \times 2$ matrix $M$, and a partition $\lambda = (\lambda_1, \lambda_2)$ of $n$ where $\lambda_1 \geq \lambda_2$. Using the Weyl's dimension formula, the dimension of $V^2_{\lambda}$ is $l+1$ where $l = \lambda_1 - \lambda_2$. Then they show that with respect to the Gelfand-Tsetlin basis parametrized by the SSYTs,\footnote{Note that \cite{bhalerao2025improvingquantumcommunicationrates} parametrize the basis by Gelfand-Tsetlin patterns, which are in bijection with SSYTs.}
\begin{align*}
    q^2_{\lambda}(M) = \mathsf{det}(M)^{\lambda_2} M_{\lambda}
\end{align*}
where the $(i,j)$ entry of the $(l+1) \times (l+1)$ matrix $M_\lambda$, denoted $m_{i,j}$, is given by (Equation 3.16 of \cite{bhalerao2025improvingquantumcommunicationrates}),
\begin{align}\label{eq:irrepv}
    m_{i, j} = \sqrt{\frac{i! (l - i)!}{j!(l-j)!}}
         \sum_{k = \max\{0, i+j-l\}}^{\min\{i, j\}} \binom{j}{k}\binom{l-j}{i-k} M_{11}^k M_{12}^{j-k}M_{21}^{i-k}M_{22}^{l-i-j+k}.
\end{align}
Here, $(i, j)$ range from $l$ down to $0$.

To easily represent states in the symmetric subspace, we will use the Dicke basis. For the symmetric subspace on $n$ qubits, the Dicke basis consists of $n+1$ orthogonal elements,
\begin{align*}
    \ket{D^n_i} &= \frac{1}{\sqrt{\binom{n}{i}}} \sum_{\substack{x \in \{0,1\}^n: \\ |x| = i}} \ket{x} &\forall~i \in \{0,1, \ldots, n\}
\end{align*}
We will also use the fact that the symmetric subspace on $n$ qubits is equal to the span of $\{\ket{\phi}^{\otimes n}: \ket{\phi} \in \mathbb{C}^2\}$. In particular, we will pick $n+1$ linearly independent states $\ket{\phi_i}$ such that they span the symmetric subspace, and represent the Dicke basis as
\begin{align*}
    \ket{D_i^n} = \sum_{j=1}^{n+1} \beta_{i,j} \ket{\phi_j}^{\otimes n}
\end{align*}
This will be useful in Section \ref{sec:computingci} in order to use Equation \ref{eq:irrepv} for our computation.

\section{Computing Coherent Information}\label{sec:computingci}
 Our goal will be to compute achievable rates (thus lower bounds to the capacity) using the expression in Equation \ref{eq:quantumcapacity} for inputs from the symmetric subspace. We will do this by generalizing the method proposed in \cite{bhalerao2025improvingquantumcommunicationrates} for computing coherent information to work for all states in the symmetric subspace, in addition to the convex combinations of tensor power states considered by them. We start by stating the block-diagonal form used in Theorem 4.3 of their paper.

 \begin{lemma}[Block-Diagonalization for Tensor Power Matrices \cite{bhalerao2025improvingquantumcommunicationrates}]\label{lem:block-diag}
     Consider a $k \times k$ block-matrix $M$, where the $(i, j)$ block of $M$ is equal to $\sigma_{i, j}^{\otimes n}$. On applying a basis change corresponding to the Schur-Weyl decomposition for $(\mathbb{C}^2)^{\otimes n}$ and reordering the rows and columns, $M$ can be written in a block-diagonal form as
     \begin{align*}
         M &\cong \bigoplus_{\lambda \vdash_2\, n} \mathbb{I}_{S_{\lambda}} \otimes M_{\lambda}
     \end{align*}
     where $M_{\lambda}$ is a $k \times k$ block matrix and the $(i, j)$ block of $M_{\lambda}$ is equal to $q^2_{\lambda}(\sigma_{i,j})$.
 \end{lemma}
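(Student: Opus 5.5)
The plan is to apply a single Schur basis change to every block of $M$ at once and then regroup. Let $W$ be the unitary on $(\mathbb{C}^2)^{\otimes n}$ implementing the Schur--Weyl isomorphism $(\mathbb{C}^2)^{\otimes n}\cong\bigoplus_{\lambda\vdash_2 n} S_\lambda\otimes V^2_\lambda$. We view $M$ as an operator on $\mathbb{C}^k\otimes(\mathbb{C}^2)^{\otimes n}$, namely $M=\sum_{i,j}\ketbra{i}{j}\otimes\sigma_{i,j}^{\otimes n}$, and conjugate by $\mathbb{I}_k\otimes W$. This replaces each block $\sigma_{i,j}^{\otimes n}$ by $W\sigma_{i,j}^{\otimes n}W^\dagger$.

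The key input is that each $\sigma_{i,j}^{\otimes n}$ commutes with every $U_\pi$, so it lies in $\mathcal{C}(S_n)$. By the decomposition stated in the background section, it therefore has the form $\bigoplus_\lambda \mathbb{I}_{S_\lambda}\otimes q^2_\lambda(\sigma_{i,j})$.

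This needs some care when $\sigma_{i,j}$ is not unitary. For invertible $\sigma_{i,j}$ it follows from the $GL(2)$ version of Schur--Weyl duality, since $\sigma^{\otimes n}$ is then the action of $\sigma\in GL(2)$ and acts as $\mathbb{I}_{S_\lambda}\otimes q^2_\lambda(\sigma)$ by definition of the isotypic decomposition. For general $\sigma_{i,j}\in L(\mathbb{C}^2)$, we extend by continuity: both $\sigma\mapsto\sigma^{\otimes n}$ and $\sigma\mapsto\bigoplus_\lambda \mathbb{I}\otimes q^2_\lambda(\sigma)$ are polynomial in the entries of $\sigma$, and they agree on the dense set $GL(2)$. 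Hence they agree everywhere. The explicit formula \eqref{eq:irrepv}, with the factor $\det(M)^{\lambda_2}$, is polynomial, which confirms the extension is the one the paper uses.

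It then remains to regroup. After conjugation we have
\[
M\cong\sum_{i,j}\ketbra{i}{j}\otimes\bigoplus_\lambda \mathbb{I}_{S_\lambda}\otimes q^2_\lambda(\sigma_{i,j}).
\]
We apply the canonical permutation of tensor factors and direct summands,
\[
\mathbb{C}^k\otimes\bigoplus_\lambda(S_\lambda\otimes V_\lambda)\cong\bigoplus_\lambda S_\lambda\otimes(\mathbb{C}^k\otimes V_\lambda).
\]
This corresponds exactly to reordering rows and columns, and it is the step the statement refers to. The result is $\bigoplus_\lambda \mathbb{I}_{S_\lambda}\otimes M_\lambda$ with $M_\lambda=\sum_{i,j}\ketbra{i}{j}\otimes q^2_\lambda(\sigma_{i,j})$, which is precisely the $k\times k$ block matrix whose $(i,j)$ block is $q^2_\lambda(\sigma_{i,j})$.

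The main point needing care is the non-invertible case handled by the density argument. We also need that the same $W$ works uniformly for all blocks, which holds because $W$ depends only on $n$ and not on $\sigma_{i,j}$. The regrouping itself is routine bookkeeping.
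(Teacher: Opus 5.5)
Your proof is correct. You conjugate by $\mathbb{I}_k\otimes W$, identify each block $W\sigma_{i,j}^{\otimes n}W^\dagger$ with $\bigoplus_\lambda \mathbb{I}_{S_\lambda}\otimes q^2_\lambda(\sigma_{i,j})$, and then permute $\mathbb{C}^k\otimes\bigoplus_\lambda(S_\lambda\otimes V^2_\lambda)$ into $\bigoplus_\lambda S_\lambda\otimes(\mathbb{C}^k\otimes V^2_\lambda)$; this is the standard argument. The paper does not reprove this lemma and only imports it from Bhalerao--Leditzky, but your route uses the ingredients its background sets up, including the extension of $q^2_\lambda$ from $GL(2)$ to $L(\mathbb{C}^2)$ by continuity. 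That extension is genuinely needed, because the blocks in the paper's application, such as $\ket{\phi_a}\bra{\phi_b}$, are rank one.
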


We will now describe how we compute the coherent information. Consider the $n$-qubit state $\rho = \frac{1}{2} (\ket{\psi_0}\bra{\psi_0} + \ket{\psi_1}\bra{\psi_1})$, where $\ket{\psi_0}$ and $\ket{\psi_1}$ are states in the symmetric subspace of $n$ qubits. We will consider the purification of $\rho$ to be $\ket{\psi} = \frac{1}{\sqrt{2}}(\ket{0}\ket{\psi_0}+ \ket{1}\ket{\psi_1})$. Note that $\ket{\psi_0}$ and $\ket{\psi_1}$ need not be orthogonal. In terms of the Dicke basis,
 \begin{align*}
     \ket{\psi_i} &= \sum_{j = 0}^{n} \alpha_{ij} \ket{D_j^n}&\text{for $i \in \{0,1\}$}
 \end{align*}
 Therefore, we have that
 \begin{align*}
     \rho &= \frac{1}{2}\sum_{i=0}^1\sum_{k, l = 0}^{n} \alpha_{i, k}\alpha_{i, l}^{\ast} \ket{D_{k}^n}\bra{D_{l}^n} \\
     \ket{\psi}\bra{\psi} &= \frac{1}{2}\sum_{i, j = 0}^1 \sum_{k, l = 0}^{n} \ket{i}\bra{j} \otimes \alpha_{i, k}\alpha_{j, l}^\ast  \ket{D_{k}^n}\bra{D_{l}^n}
 \end{align*}
By using the decomposition $\ket{D_i^n} = \sum_{j=1}^{n+1} \beta_{i,j} \ket{\phi_j}^{\otimes n}$, we get that on applying a basis change corresponding to the Schur transform,
\begin{equation}
\begin{aligned}\label{eq:diagrho}
    \ket{D_{k}^n}\bra{D_{l}^n} &\cong \bigoplus_{\lambda \vdash_2\, n} \mathbb{I}_{S_{\lambda}} \otimes \sum_{a,b = 0}^{n} \beta_{k, a} \beta_{l, b}^\ast~q^2_{\lambda}(\ket{\phi_a}\bra{\phi_b}) \\
    \rho &\cong \bigoplus_{\lambda \vdash_2\, n} \mathbb{I}_{S_{\lambda}} \otimes \frac{1}{2}\sum_{i=0}^1\sum_{k, l = 0}^{n} \alpha_{i, k}\alpha_{i, l}^{\ast} \sum_{a, b = 0}^{n} \beta_{k, a} \beta_{l, b}^\ast~q^2_{\lambda}(\ket{\phi_a}\bra{\phi_b})
\end{aligned}
\end{equation}
To get a block diagonal form for $\ket{\psi}\bra{\psi}$, we can think of $\ket{\psi}\bra{\psi}$ as a $2 \times 2$ block diagonal matrix, where the $(i,j)$ block is
\begin{align*}
    \frac{1}{2}\sum_{k, l = 0}^{n} \alpha_{i, k}\alpha_{j, l}^\ast \ket{D_{k}^n}\bra{D_{l}^n} = \frac{1}{2}\sum_{k, l = 0}^{n} \alpha_{i, k}\alpha_{j, l}^\ast \sum_{a, b = 0}^{n} \beta_{k, a} \beta_{l, b}^\ast~\ket{\phi_a}\bra{\phi_b}^{\otimes n}
\end{align*}
Note that since $\rho$ is a convex combination of states from the symmetric subspace, Equation \ref{eq:diagrho} will only have non-zero component for $\lambda = (n, 0)$, but this form will be useful for our further analysis after application of the channel. Therefore, using Lemma \ref{lem:block-diag}, we have that
\begin{align}\label{eq:diagpsi}
    \ket{\psi}\bra{\psi} \cong \bigoplus_{\lambda \vdash_2\, n} \mathbb{I}_{S_{\lambda}} \otimes \frac{1}{2} \sum_{i, j = 0}^1 \ket{i}\bra{j} \otimes \sum_{k, l = 0}^{n} \alpha_{i, k}\alpha_{j, l}^\ast \sum_{a, b = 0}^{n} \beta_{k, a} \beta_{l, b}^\ast~q^2_{\lambda}(\ket{\phi_a}\bra{\phi_b})
\end{align}
To compute the coherent information, we need to compute the entropies of $\mathcal{N}^{\otimes n}(\rho)$ and $I \otimes \mathcal{N}^{\otimes n}(\ket{\psi}\bra{\psi})$. In this case, we can use a similar analysis as Equations \ref{eq:diagrho} and \ref{eq:diagpsi} to get
\begin{align*}
    \mathcal{N}^{\otimes n}(\ket{D_{k}^n}\bra{D_{l}^n}) &\cong \bigoplus_{\lambda \vdash_2\, n} \mathbb{I}_{S_{\lambda}} \otimes \sum_{a,b = 0}^{n} \beta_{k, a} \beta_{l, b}^\ast~q^2_{\lambda}(\mathcal{N}(\ket{\phi_a}\bra{\phi_b})) \\
    \mathcal{N}^{\otimes n}(\rho) &\cong \bigoplus_{\lambda \vdash_2\, n} \mathbb{I}_{S_{\lambda}} \otimes \frac{1}{2}\sum_{i=0}^1\sum_{k, l = 0}^{n} \alpha_{i, k}\alpha_{i, l}^{\ast} \sum_{a, b = 0}^{n} \beta_{k, a} \beta_{l, b}^\ast~q^2_{\lambda}(\mathcal{N}(\ket{\phi_a}\bra{\phi_b})) \\
    I \otimes \mathcal{N}^{\otimes n}(\ket{\psi}\bra{\psi}) &\cong \bigoplus_{\lambda \vdash_2\, n} \mathbb{I}_{S_{\lambda}} \otimes \frac{1}{2} \sum_{i, j = 0}^1 \ket{i}\bra{j} \otimes \sum_{k, l = 0}^{n} \alpha_{i, k}\alpha_{j, l}^\ast \sum_{a, b = 0}^{n} \beta_{k, a} \beta_{l, b}^\ast~q^2_{\lambda}(\mathcal{N}(\ket{\phi_a}\bra{\phi_b}))
\end{align*}
We will now follow a notation similar to that of Theorem 4.3 in \cite{bhalerao2025improvingquantumcommunicationrates}, to state the formula for coherent information. Define $\sigma_{\lambda}, \omega_{\lambda}$ such that
\begin{equation}\label{eq:params}
\begin{aligned}
   \overline{q}_{\lambda} &= \frac{1}{2} \sum_{i=0}^1\sum_{k, l = 0}^{n} \alpha_{i, k}\alpha_{i, l}^{\ast} \sum_{a, b = 0}^{n} \beta_{k, a} \beta_{l, b}^\ast~\mathsf{tr(}q^2_{\lambda}(\mathcal{N}(\ket{\phi_a}\bra{\phi_b}))) \\
   c_{\lambda} &= \overline{q}_{\lambda} \cdot \mathsf{dim}(S_{\lambda}) \\
   \mathcal{N}^{\otimes n}(\rho) &\cong \bigoplus_{\lambda \vdash_2\, n} c_{\lambda} \frac{\mathbb{I}_{S_{\lambda}}}{\mathsf{dim}(S_{\lambda})} \otimes \sigma_{\lambda} \\
   I \otimes \mathcal{N}^{\otimes n}(\ket{\psi}\bra{\psi}) &\cong \bigoplus_{\lambda \vdash_2\, n} c_{\lambda} \frac{\mathbb{I}_{S_{\lambda}}}{\mathsf{dim}(S_{\lambda})} \otimes \omega_{\lambda}
\end{aligned}
\end{equation}
With this notational setup, we can state our formula for coherent information,
\begin{theorem}[Analogue of Theorem 4.3 from \cite{bhalerao2025improvingquantumcommunicationrates}]
    Given an $n$-qubit input state $\rho = \frac{1}{2}(\ket{\psi_0}\bra{\psi_0} + \ket{\psi_1}\bra{\psi_1})$ where $\ket{\psi_0}, \ket{\psi_1}$ are in the symmetric subspace of $n$-qubits. Let $\ket{\psi} = \frac{1}{\sqrt{2}}(\ket{0}\ket{\psi_0} + \ket{1}\ket{\psi_1})$ be a purification of $\rho$. The coherent information (see Equation \ref{eq:quantumcapacity}) is given by the formula
    \begin{align*}
        I_c(\mathcal{N}^{\otimes n}, \rho) &= \sum_{\lambda \vdash_2\, n} c_{\lambda} (S(\sigma_{\lambda}) - S(\omega_{\lambda}))
    \end{align*}
    where $c_{\lambda}, \sigma_{\lambda}, \omega_{\lambda}$ are as defined in Equation \ref{eq:params}.
\end{theorem}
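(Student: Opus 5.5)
The plan is to compute the two entropies in Equation \ref{eq:quantumcapacity} separately from the block-diagonal forms already derived, and then subtract.

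\textbf{Step 1: entropy of a block-diagonal state.} Both $\mathcal{N}^{\otimes n}(\rho)$ and $I \otimes \mathcal{N}^{\otimes n}(\ket{\psi}\bra{\psi})$ are, after the unitary Schur basis change, direct sums over $\lambda \vdash_2 n$. Von Neumann entropy is invariant under unitary conjugation, so we may work with the block-diagonal forms. For a state $\tau = \bigoplus_\lambda p_\lambda \tau_\lambda$ on orthogonal subspaces, with $p_\lambda \ge 0$, $\sum_\lambda p_\lambda = 1$ and each $\tau_\lambda$ normalized, the eigenvalues of $\tau$ are the union of the eigenvalues of the $p_\lambda \tau_\lambda$. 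Hence
\[
S(\tau) = H(\{p_\lambda\}) + \sum_\lambda p_\lambda S(\tau_\lambda).
\]
Applying this with $\tau_\lambda = \frac{\mathbb{I}_{S_\lambda}}{\dim S_\lambda} \otimes \sigma_\lambda$, and using additivity of entropy under tensor products, gives
\[
S(\mathcal{N}^{\otimes n}(\rho)) = H(\{c_\lambda\}) + \sum_\lambda c_\lambda\bigl(\log \dim S_\lambda + S(\sigma_\lambda)\bigr).
\]
The same formula holds with $\omega_\lambda$ in place of $\sigma_\lambda$ for the output of the purification.

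\textbf{Step 2: the weights agree and cancel.} The key point is that the weight $c_\lambda$ is the same in both decompositions. In the notation of Equation \ref{eq:params}, $\sigma_\lambda$ and $\omega_\lambda$ are defined by normalizing the $\lambda$-block by the common factor $c_\lambda$, so they are genuine density matrices only if $\Tr \omega_\lambda = 1$ with this same $c_\lambda$.

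To check this, take the partial trace over the reference system $R$ in Equation \ref{eq:diagpsi} after applying the channel. Tracing out $\ket{i}\bra{j}$ keeps only the diagonal terms $i=j$, and these reproduce exactly the $\lambda$-block of $\mathcal{N}^{\otimes n}(\rho)$. Partial trace over $R$ commutes with the block structure, since the Schur decomposition acts only on the $n$ channel qubits. Therefore $\Tr \omega_\lambda = \Tr \sigma_\lambda$.

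Next, $\Tr\bigl(\mathbb{I}_{S_\lambda} \otimes X_\lambda\bigr) = \dim(S_\lambda)\,\overline{q}_\lambda = c_\lambda$, where $X_\lambda$ is the unnormalized $V_\lambda$-block whose trace is $\overline{q}_\lambda$. This confirms $\Tr \sigma_\lambda = \Tr \omega_\lambda = 1$. Since $\mathcal{N}^{\otimes n}$ is trace preserving, we also get $\sum_\lambda c_\lambda = 1$.

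\textbf{Step 3: subtract.} Taking the difference of the two entropies, the terms $H(\{c_\lambda\})$ and $\sum_\lambda c_\lambda \log \dim S_\lambda$ cancel. What remains is
\[
I_c(\mathcal{N}^{\otimes n}, \rho) = \sum_\lambda c_\lambda \bigl(S(\sigma_\lambda) - S(\omega_\lambda)\bigr),
\]
which is the claimed formula. Terms with $c_\lambda = 0$ are simply omitted.

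\textbf{Main obstacle.} The hardest step is justifying the block form for the channel output that Step 2 relies on. This requires that $\mathcal{N}^{\otimes n}$ applied to each $\ket{\phi_a}\bra{\phi_b}^{\otimes n}$ gives $\mathcal{N}(\ket{\phi_a}\bra{\phi_b})^{\otimes n}$, which is a tensor-power matrix so that Lemma \ref{lem:block-diag} applies. Here $\mathcal{N}(\ket{\phi_a}\bra{\phi_b})$ need not be invertible or Hermitian, so $q^2_\lambda$ must be taken in its extension to all of $L(\mathbb{C}^2)$ by continuity, as noted earlier. The result then follows by linearity in the coefficients $\alpha$ and $\beta$.
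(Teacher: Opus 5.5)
Your proposal is correct and follows the paper's proof. Both write each entropy as $H(\mathbf{c}) + \sum_\lambda c_\lambda\bigl(\log\dim S_\lambda + S(\cdot)\bigr)$ using the block-diagonal forms of Equation \ref{eq:params}, then subtract so the common terms cancel. Your Step 2, which uses the partial trace over $R$ to show that the same $c_\lambda$ normalizes both $\sigma_\lambda$ and $\omega_\lambda$, makes explicit a point the paper leaves implicit in its definitions.
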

\begin{proof}
     We will compute the entropies $S(\mathcal{N}^{\otimes n}(\rho))$ and $S(I \otimes \mathcal{N}^{\otimes n}(\ket{\psi}\bra{\psi}))$. Using Equation \ref{eq:params},
     \begin{align*}
        S(\mathcal{N}^{\otimes n}(\rho)) &= H(\mathbf{c}) + \sum_{\lambda \vdash_2\, n} c_{\lambda}(S(\sigma_{\lambda}) + \log(\mathsf{dim}(S_{\lambda}))) \\
        S(I \otimes \mathcal{N}^{\otimes n}(\ket{\psi}\bra{\psi})) &= H(\mathbf{c}) + \sum_{\lambda \vdash_2\, n} c_{\lambda}(S(\omega_{\lambda}) + \log(\mathsf{dim}(S_{\lambda})))
     \end{align*}
     where $\mathbf{c}$ is the vector formed by the coefficients $c_{\lambda}$ and $H$ is the Shannon entropy. Computing $S(\mathcal{N}^{\otimes n}(\rho)) - S(I \otimes \mathcal{N}^{\otimes n}(\ket{\psi}\bra{\psi}))$ establishes the claim.
\end{proof}
\paragraph{Code Implementation.} In order to implement this procedure, we start by picking a tensor-power basis of the form $\{\ket{\phi_i}^{\otimes n}\}_{i=1}^{n+1}$ for the symmetric subspace. This is done by picking $n+1$ single qubit states $\ket{\phi_i}$ which are ``well-spread'' on the Bloch sphere, i.e., we try to minimize the maximum overlap between each pair of states. By following this procedure, we get a low condition number for the basis change matrix between the Dicke basis and the tensor power basis, thus guaranteeing numerical stability in our results. In addition, due to numerical inaccuracy in computing the eigenvalues of density matrices (to compute the entropy), we might get negative eigenvalues. To handle that, we first project the eigenvalue vector onto the probability simplex \cite{algo}, before computing the coherent information. Given this basis, we precompute (for a given error probability $p$) the matrices $q^2_{\lambda}(\mathcal{N}(\ket{\phi_a}\bra{\phi_b}))$ for all $\lambda \vdash_2\, n$ and $a, b \in \{0, 1, \ldots, n\}$. We can then use this to precompute $\mathcal{N}^{\otimes n}(\ket{D_{k}^n}\bra{D_{l}^n})$ for all $k, l \in \{0, 1, \ldots, n\}$. Then it is possible to compute the coherent information for a given input state $\rho= \frac{1}{2}(\ket{\psi_0}\bra{\psi_0} + \ket{\psi_1}\bra{\psi_1})$ where $\ket{\psi_0}, \ket{\psi_1}$ are in the symmetric subspace of $n$-qubits. To find such states $\rho$ with high thresholds, we use a gradient descent based optimizer for coherent information. Our coherent information computation is accurate upto at least $10$ decimal places for the values of $n$ that we have computed, and the reported lower bounds on thresholds correspond to a coherent information of $10^{-7}$. The Github repository with the code can be found at \cite{github}, along with the input states used to obtain the reported thresholds.

\section{Numerical Results}
\label{sectionnumerics}
\subsection{Depolarizing Channel}
We start by discussing the results for the depolarizing channel (see Figure \ref{plot:depol}). Using our method, we obtain a lower bound of $p = 0.064657$ on the error threshold for the depolarizing channel at $n = 45$. This improves upon the prior best result of Fern and Whaley \cite{fern}, with an encoding length that is on the order of $10^{-8}$ compared to theirs. In fact, we obtain the first improved lower bound on the threshold at $n = 24$. We note that for small values of $n$, our optimization over the symmetric subspace yields the thresholds obtained by repetition codes. As such, our thresholds decrease starting from $n = 5$ to $n = 15$. However, from $n = 16$ onwards, we start seeing a rise in the thresholds (unlike repetition codes), and the threshold keeps improving as we increase $n$, till we get higher thresholds than the best ones known previously. This is in contrast to the results obtained by \cite{bhalerao2025improvingquantumcommunicationrates}, who could only recover the thresholds obtained by repetition codes by their optimization over tensor power states. While we obtain our best result at $n = 45$, our optimization has not yet yielded a peak in the threshold lower bound, i.e., even at $n = 60$ we obtain comparable thresholds, roughly around $p \approx 0.0643$. Note that our optimization does not necessarily find optimal states for a fixed $n$ as it mainly depends on gradient descent based heuristics, and the optimization under consideration is highly non-convex.

\begin{figure}
    \centering
    \includegraphics[width=0.75\linewidth]{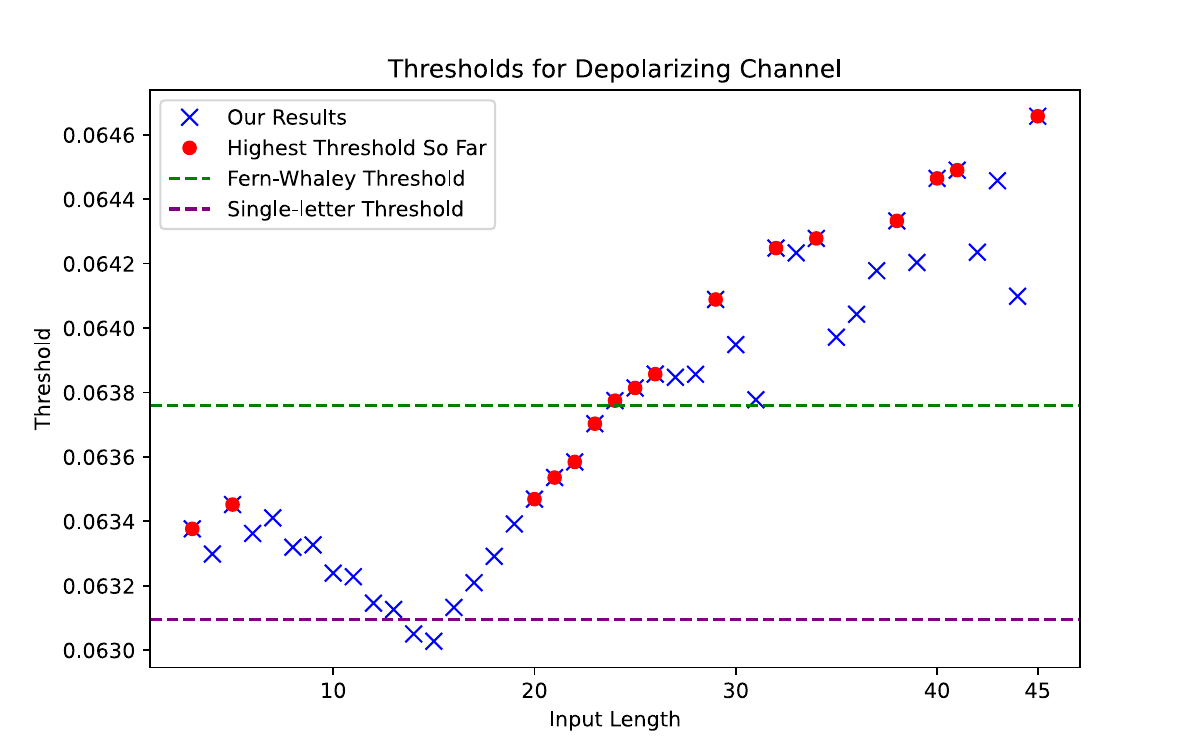}
    \caption{Error Thresholds for the Depolarizing Channel.}
    \label{fig:dep4}
    \label{plot:depol}
\end{figure}

\subsection{Independent X-Z Channel and 2-Pauli Channel}
For the independent X-Z channel (see Figure \ref{fig:indep}), we report an improved lower bound of $p = 0.118371$ on the error threshold at $n = 30$. The previous best lower bound was obtained by \cite{bhalerao2025improvingquantumcommunicationrates} at $n = 18$, and we start improving on their lower bound at a smaller encoding length of $n = 12$.

For the $2$-Pauli channel (see Figure \ref{fig:two_pauli}), we report an improved lower bound of $p = 0.118067$ on the error threshold at $n = 30$. The previous best lower bound was obtained by \cite{bhalerao2025improvingquantumcommunicationrates} at $n = 24$, and we start improving on their lower bound at a smaller encoding length of $n = 15$.

Note that for both these channels, our optimization has not yet reached its peak in terms of the lower bound on threshold, we ran our optimization and report the obtained lower bounds till $n = 30$.

In the case of $2$-Pauli channel, even obtaining lower bounds better than the single-letter threshold seems hard when using combinations of repetition codes and other stabilizer codes, and it requires concatenations of long length repetition codes (such as $5 \times 13$ or $3 \times 18$ concatenated bit and phase flip repetition codes) in order to outperform the single-letter threshold \cite{agarwal2026error}. In contrast, when optimizing over permutation invariant codes, it is possible to outperform the single-letter threshold at encoding lengths as small as $n = 6$.
\begin{figure}
    \centering
    \includegraphics[width=0.75\linewidth]{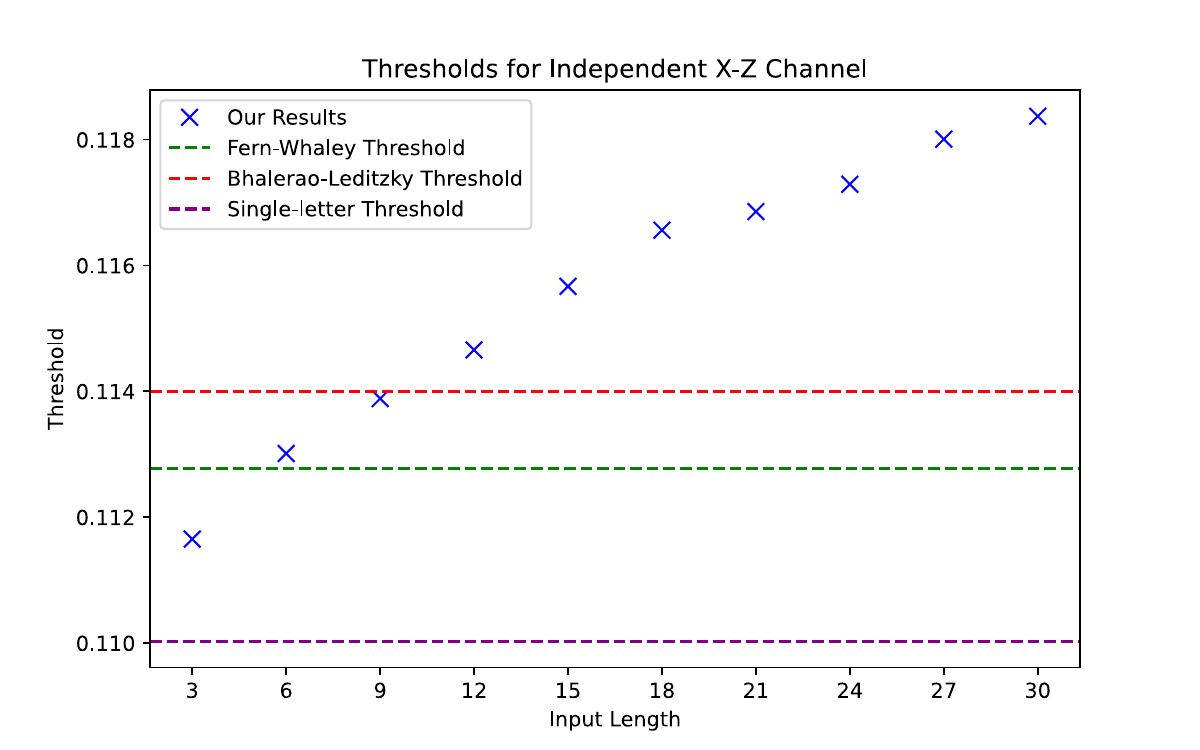}
    \caption{Error Thresholds for the Independent X-Z Channel.}
    \label{fig:indep}
\end{figure}
\begin{figure}
    \centering
    \includegraphics[width=0.75\linewidth]{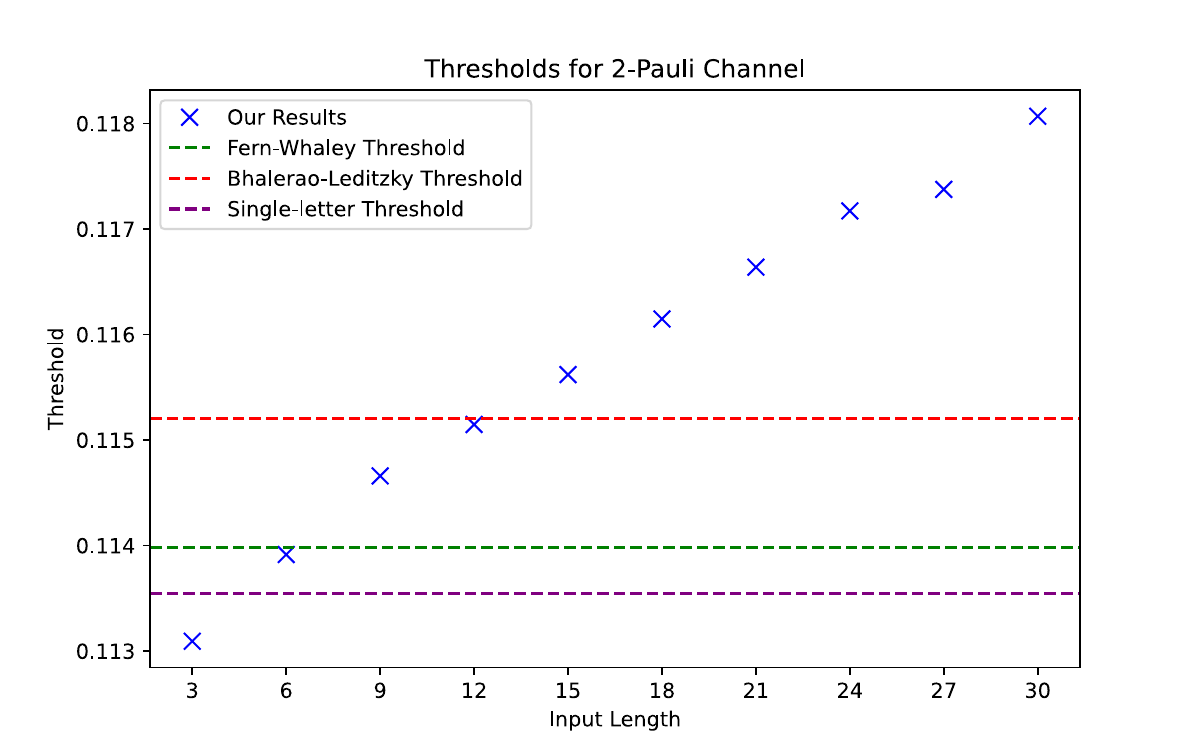}
    \caption{Error Thresholds for the 2-Pauli Channel.}
    \label{fig:two_pauli}
\end{figure}
\section{Degeneracy}\label{sec:degen}
In this section, we will tie together the good performance of permutation invariant codes to the concept of degeneracy. In \cite{shor1996quantumerrorcorrectingcodesneed,divincenzo1998quantum,smith2007degenerate} the conceptual reason for the good performance of repetition codes was that many of the typical Pauli errors are degenerate for repetition codes. Intuitively, degeneracy captures the fact that if two errors act identically on the codespace, then the entropy of the environment of the channel is reduced, since many distinct Kraus operators now collapse together with respect to this codespace. Another way to think of this is that if Kraus operators $E_1$ and $E_2$ act identically on the codespace, then $E_1 - E_2$ annihilates the codespace. We show in Theorem \ref{thm:annihilate} that when expressed in a certain representation-theoretic basis, a large number of Kraus operators of the depolarizing channel annihilate the permutation-invariant subspace.

While degeneracy is usually studied in terms of Pauli Kraus operators for stabilizer codes, here we will talk about degeneracy using Kraus operators represented in the Young-Yamanouchi basis (for $S_{\lambda}$) and the Gelfand-Tsetlin basis (for $V_{\lambda}^d$). To find the Kraus operators, we will start by finding the eigenvectors (and eigenvalues) of the Choi matrix of the channel. We will then use the permutation invariance of the input to show that many Kraus operators annihilate the input codespace. While the initial part of our discussion applies to an arbitrary channel, we will restrict ourselves to mixed Pauli channels, which means that the input and output dimensions are $2$. Given a mixed Pauli channel $\mathcal{N}$ with error probabilities $(p_I, p_X, p_Y, p_Z)$, we will represent the Choi matrix of the channel as $J$, so
\begin{align*}
    J = (I \otimes \mathcal{N})(\Phi)
\end{align*}
where $\Phi = \frac{1}{2}(\sum_{i = 0}^1 \ket{i}\ket{i})(\sum_{i = 0}^1 \bra{i}\bra{i})$. Note that $J$ is a $4 \times 4$ matrix which is diagonal in the Bell basis with diagonal entries $(p_I, p_X, p_Y, p_Z)$. Note that $J$ is a positive semi-definite matrix with trace $1$. In order to get the Kraus operators, we can find the eigenvectors $\ket{v_i}$ with eigenvalue $\gamma_i$ of $J$ such that the norm of $\ket{v_i}$ is $1$. Then we set the Kraus operators $A_i$ such that $\mathsf{vec}(A_i) = \sqrt{2\gamma_i} \ket{v_i}$. For $n$ uses of the channel, the Choi matrix is $J^{\otimes n}$. Now we can set the Kraus operators $A_i$ of $\mathcal{N}^{\otimes n}$ such that $\mathsf{vec}(A_i) = \sqrt{2^n\gamma_i} \ket{v_i}$, where $\gamma_i$ are eigenvalues and $\ket{v_i}$ are corresponding (normalized) eigenvectors. We already have a set of eigenvectors in terms of tensor products of the Bell basis, but now we will find another set of eigenvectors (and hence a different Kraus representation) using representation theory. To find these eigenvectors, we will use the following decomposition of $(\mathbb{C}^4)^{\otimes n}$:
\begin{align*}
    (\mathbb{C}^4)^{\otimes n} \cong \bigoplus_{\lambda \vdash_4\, n} S_{\lambda} \otimes V^4_{\lambda}
\end{align*}
We will use the Young-Yamanouchi basis for $S_{\lambda}$ and the Gelfand-Tsetlin basis for $V^4_{\lambda}$. We label the corresponding orthogonal basis elements of $(\mathbb{C}^4)^{\otimes n}$ as $\ket{\mathsf{SYT^i}_\lambda} \otimes \ket{\mathsf{SSYT^j}_\lambda}$ pairs for $i \in \{1, 2, \ldots, \mathsf{dim}(S_{\lambda})\}$ and $j \in \{1, 2, \ldots, \mathsf{dim}(V^4_{\lambda})\}$. Since $[J^{\otimes n}, U_{\pi}] = 0$ for all $\pi \in S_n$, it is possible to write $J^{\otimes n}$ as
\begin{align*}
    J^{\otimes n} &= \bigoplus_{\lambda\vdash_4\,n}~\mathbb{I}_{S_{\lambda}} \otimes q^4_{\lambda}(J)
\end{align*}
with respect to the basis labelling described above. In fact, we use the Gelfand-Tsetlin basis corresponding to the Bell basis, since we already knew that $J$ is diagonal in the Bell basis. This allows us to use the following property \cite{Mol06}.
\begin{lemma}[Theorem 2.3 of \cite{Mol06}]
    Let $M$ be a $d \times d$ matrix $M = \mathsf{diag}(m_1, \ldots, m_d)$ with respect to a certain basis $B = \{b_1, \ldots, b_d\}$. For $\lambda \vdash_d\, n$, consider the linear operator $q^d_{\lambda}(M)$ with respect to the Gelfand-Tsetlin basis corresponding to $B$. Then $q^d_{\lambda}(M)$ is a diagonal matrix.
\end{lemma}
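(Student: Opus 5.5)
The key fact I would use is that Gelfand--Tsetlin basis vectors are weight vectors for the diagonal torus, and that the vectors in each weight space can be separated by the Gelfand--Tsetlin chain of subalgebras. I would first reduce to a statement about a fixed irrep $V^d_\lambda$ realized inside $(\mathbb{C}^d)^{\otimes n}$. In the basis $B$, $M^{\otimes n}$ is diagonal on product vectors: it acts on $b_{i_1}\otimes\cdots\otimes b_{i_n}$ by the scalar $m_{i_1}\cdots m_{i_n}$. Writing $\mu_k$ for the number of occurrences of index $k$, this scalar is $\prod_k m_k^{\mu_k}$. So $M^{\otimes n}$ acts on each weight space $W_\mu=\mathrm{span}\{b_{i_1}\otimes\cdots\otimes b_{i_n} : \text{content } \mu\}$ as the scalar $m^\mu$.

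Since $W_\mu$ commutes with the $S_n$ action, it decomposes as $\bigoplus_\lambda S_\lambda\otimes V^d_\lambda[\mu]$, where $V^d_\lambda[\mu]$ is the $\mu$-weight space of $V^d_\lambda$. Hence $q^d_\lambda(M)$ acts on $V^d_\lambda[\mu]$ as the scalar $m^\mu$. By continuity this also holds when some $m_k=0$, using the extension of $q^d_\lambda$ to all of $L(\mathbb{C}^d)$ noted in the background.

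It remains to show that every Gelfand--Tsetlin basis vector lies in a single weight space. I would argue this by induction along the chain $U(1)\subset U(2)\subset\cdots\subset U(d)$, where $U(k)$ acts on $\mathrm{span}\{b_1,\dots,b_k\}$. A Gelfand--Tsetlin vector attached to an SSYT $T$ lies in the isotypic component of $U(k)$-irreps of shape $\lambda^{(k)}$ for every $k$, where $\lambda^{(k)}$ is the shape formed by the entries $\le k$ in $T$. The diagonal torus element $\mathrm{diag}(1,\dots,1,t,1,\dots,1)$, with $t$ in slot $k$, lies in the center of $U(k)$ modulo $U(k-1)$. More precisely, it equals the scalar $t$ on $\mathrm{span}(b_1..b_k)$ composed with $\mathrm{diag}(t^{-1},\dots,t^{-1})$ in $U(k-1)$. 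By Schur's lemma, the central scalar acts on a $U(k)$-irrep of shape $\lambda^{(k)}$ as $t^{|\lambda^{(k)}|}$. Likewise, the $U(k-1)$ part acts on the $\lambda^{(k-1)}$ component as $t^{-|\lambda^{(k-1)}|}$. So the vector is an eigenvector with eigenvalue $t^{|\lambda^{(k)}|-|\lambda^{(k-1)}|}$, which is $t^{\#\{k\text{'s in }T\}}$.

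Thus each Gelfand--Tsetlin vector is a weight vector with weight given by the content of $T$. By the previous paragraph, $q^d_\lambda(M)$ scales it by $\prod_k m_k^{\#k\in T}$. So $q^d_\lambda(M)$ is diagonal in that basis, with these explicit entries.

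The main obstacle is making precise the claim that a Gelfand--Tsetlin vector lies in a single $U(k)$-isotypic component for each $k$. This is the definition of the basis via branching, and branching is multiplicity-free. I would simply invoke multiplicity-free branching and the definition from \cite{Mol06}, and make sure that the basis corresponding to $B$ means the chain is defined with respect to $b_1,\dots,b_k$.
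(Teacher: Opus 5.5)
Your proposal is correct, but it takes a different route from the paper. The paper gives no argument for this lemma and simply cites \cite{Mol06}. There the statement follows from the explicit Gelfand--Tsetlin formulas: each Cartan generator $E_{kk}$ acts diagonally on the pattern basis with eigenvalue $|\lambda^{(k)}|-|\lambda^{(k-1)}|$. Diagonal group elements therefore act diagonally, and by polynomiality so does every diagonal $M$.

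You derive exactly this eigenvalue at the group level instead. The element $tI_k\oplus I_{d-k}$ is central in $U(k)$, so Schur's lemma on the $\lambda^{(k)}$-isotypic component gives the factor $t^{|\lambda^{(k)}|}$. The analogous central element of $U(k-1)$ then strips off $t^{|\lambda^{(k-1)}|}$. This is self-contained: it needs only multiplicity-free branching and the definition of the basis along $U(1)\subset\cdots\subset U(d)$. It also avoids the lowering-operator machinery behind Molev's off-diagonal formulas, which are irrelevant here.

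Your route buys something extra. It proves directly that each Gelfand--Tsetlin vector is a weight vector whose weight is the content of $T$, which is the content of Lemma~\ref{lem:wtvector} and Corollary~\ref{cor:probwtvector}. The paper runs that implication the other way, deducing the weight property from diagonality by a formal-variable eigenvalue comparison.

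Three small points of hygiene.
\begin{enumerate}
\item $S_n$-invariance of $W_\mu$ by itself only yields $W_\mu=\bigoplus_\lambda S_\lambda\otimes X_\lambda$ for some subspaces $X_\lambda\subseteq V^d_\lambda$. Identifying $X_\lambda$ with $V^d_\lambda[\mu]$ uses that the torus acts through $\mathbb{I}_{S_\lambda}\otimes q^d_\lambda$.
\item That weight-space paragraph is actually dispensable. Once $v_T$ is a joint eigenvector of every $q^d_\lambda(\mathrm{diag}(1,\dots,t,\dots,1))$ with $|t|=1$, polynomiality in $t$ and multiplicativity of $q^d_\lambda$ on $L(\mathbb{C}^d)$ give $q^d_\lambda(M)v_T=\prod_k m_k^{c_k(T)}v_T$ for every diagonal $M$, singular or not. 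Here $c_k(T)$ is the number of $k$'s in $T$.
\item Letting $U(k)$ act on $\mathrm{span}\{b_1,\dots,b_k\}$ presumes $B$ is orthonormal, as the Bell basis is. For a general basis, run the same argument with $GL(k)$.
\end{enumerate}
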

Therefore, we know that $J^{\otimes n}$ is a diagonal matrix when we pick the orthogonal basis elements of $(\mathbb{C}^4)^{\otimes n}$ using the Young-Yamanouchi basis for $S_{\lambda}$ and the Gelfand-Tsetlin basis for $V^4_{\lambda}$. Therefore, we now have the eigenvectors $\ket{\mathsf{SYT}^i_\lambda} \otimes \ket{\mathsf{SSYT}^j_\lambda}$ of $J^{\otimes n}$ with respect to $(\mathbb{C}^4)^{\otimes n} \cong \bigoplus_{\lambda \vdash_4\, n} S_{\lambda} \otimes V^4_{\lambda}$, with eigenvalues $\gamma_{\lambda}^{i,j}$. Note that the eigenvalues $\gamma_{\lambda}^{i,j}$ form a probability measure and in Appendix \ref{app:techlemmas} we will refer to this as the eigenvalue measure of the Choi matrix.  In order to obtain the Kraus operators (which we will call the Schur basis Kraus operators), we want to express each of these eigenvectors using the following decomposition (in terms of input and output space):
\begin{align*}
    (\mathbb{C}^4)^{\otimes n} \cong (\mathbb{C}^2)^{\otimes n} \otimes (\mathbb{C}^2)^{\otimes n} \cong \left(\bigoplus_{\lambda \vdash_2\, n} S_{\lambda} \otimes V^2_{\lambda}\right) \otimes \left(\bigoplus_{\lambda \vdash_2\, n} S_{\lambda} \otimes V^2_{\lambda}\right)
\end{align*}
Here we will think of $(\mathbb{C}^2)^{\otimes n} \otimes (\mathbb{C}^2)^{\otimes n}$ as $\mathcal{H}^{out} \otimes \mathcal{H}^{in}$. We will label the orthogonal basis elements of $(\mathbb{C}^2)^{\otimes n}$ as $\ket{\mathsf{SYT}^i_\lambda} \otimes \ket{\mathsf{SSYT}^j_\lambda}$ pairs for $\lambda \vdash_2\, n$, $i \in \{1, 2, \ldots, \mathsf{dim}(S_{\lambda})\}$ and $j \in \{1, 2, \ldots, \mathsf{dim}(V^2_{\lambda})\}$. We will use $[k]$ to denote the set $\{1, 2, \ldots, k\}$. Therefore, for any $\lambda \vdash_4\, n$, we want to express
\begin{align}\label{eqn: kraus_in_io_form}
    \ket{\mathsf{SYT}^i_\lambda} \ket{\mathsf{SSYT}^j_\lambda} = \sum_{\lambda_a, \lambda_b \vdash_2\, n} \sum_{\substack{i_a \in [\mathsf{dim}(S_{\lambda_a})] \\i_b \in [\mathsf{dim}(S_{\lambda_b})]}}  \sum_{\substack{j_a \in [\mathsf{dim}(V^2_{\lambda_a})] \\j_b \in [\mathsf{dim}(V^2_{\lambda_b})]}}\alpha_{i, j, \lambda_a, \lambda_b}^{i_a, i_b, j_a, j_b} \ket{\mathsf{SYT}_{\lambda_a}^{i_a}} \ket{\mathsf{SSYT}_{\lambda_a}^{j_a}} \ket{\mathsf{SYT}_{\lambda_b}^{i_b}} \ket{\mathsf{SSYT}_{\lambda_b}^{j_b}}
\end{align}
We now use the permutation invariance of our input codespace. In particular, we want to project our input onto the symmetric subspace before applying the channel. Let the projector onto the symmetric subspace for the input register be $\Pi_{\mathsf{sym}}^{in}$, and the identity operator on the output subspace be $I^{out}$. Then,
\begin{align*}
    &(I^{out} \otimes \Pi_{\mathsf{sym}}^{in})\ket{\mathsf{SYT^i}_\lambda} \ket{\mathsf{SSYT^j}_\lambda} \\
    =& \sum_{\substack{\lambda_a \vdash_2\, n \\ \lambda_b = (n, 0)}} \sum_{\substack{i_a \in [\mathsf{dim}(S_{\lambda_a})] \\ i_b = 1}} \sum_{\substack{j_a \in [\mathsf{dim}(V^2_{\lambda_a})] \\ j_b \in [n+1]}} \alpha_{i, j, \lambda_a, \lambda_b}^{i_a, i_b, j_a, j_b} \ket{\mathsf{SYT}_{\lambda_a}^{i_a}} \ket{\mathsf{SSYT}_{\lambda_a}^{j_a}} \ket{\mathsf{SYT}_{\lambda_b}^{i_b}} \ket{\mathsf{SSYT}_{\lambda_b}^{j_b}}
\end{align*}
We will now use the following property (proven in Appendix \ref{app:techlemmas}) to show that for any $\lambda \vdash_4\, n$ such that $\lambda_3 > 0$, the corresponding Kraus operators annihilate the permutation invariant subspace.
\begin{restatable}{lemma}{annihilate}\label{lem:annihilate}

    Consider $\lambda \vdash_4\, n$, $i \in \{1, 2, \ldots, \mathsf{dim}(S_{\lambda})\}$ and $j \in \{1, 2, \ldots, \mathsf{dim}(V^2_{\lambda})\}$. Then, we have that
    \begin{align*}
        (I^{out} \otimes \Pi_{\mathsf{sym}}^{in})\ket{\mathsf{SYT^i}_\lambda} \ket{\mathsf{SSYT^j}_\lambda} = \sum_{\substack{i_a \in [\mathsf{dim}(S_{\lambda_a})] \\ i_b = 1}} \sum_{\substack{j_a \in [\mathsf{dim}(V^2_{\lambda_a})] \\ j_b \in [n+1]}} \alpha_{i, j, \lambda_a, \lambda_b}^{i_a, i_b, j_a, j_b} \ket{\mathsf{SYT}_{\lambda_a}^{i_a}} \ket{\mathsf{SSYT}_{\lambda_a}^{j_a}} \ket{\mathsf{SYT}_{\lambda_b}^{i_b}} \ket{\mathsf{SSYT}_{\lambda_b}^{j_b}}
    \end{align*}
    where $\lambda_a, \lambda_b \vdash_2\, n$ and $\lambda_a = \lambda, \lambda_b = (n,0)$.
\end{restatable}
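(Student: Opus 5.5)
The plan is to use the \emph{diagonal} action of $S_n$ and the orthogonality of isotypic components, rather than any explicit Clebsch--Gordan computation. Under the identification $(\mathbb{C}^4)^{\otimes n} \cong \mathcal{H}^{out}\otimes\mathcal{H}^{in} = (\mathbb{C}^2)^{\otimes n}\otimes(\mathbb{C}^2)^{\otimes n}$, the permutation $U_\pi$ on $(\mathbb{C}^4)^{\otimes n}$ becomes $U^{out}_\pi \otimes U^{in}_\pi$. This holds because permuting the $n$ copies of $\mathbb{C}^4 = \mathbb{C}^2_{out}\otimes\mathbb{C}^2_{in}$ permutes the output qubits and the input qubits simultaneously. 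By construction of the decomposition $(\mathbb{C}^4)^{\otimes n}\cong\bigoplus_{\lambda\vdash_4 n}S_\lambda\otimes V^4_\lambda$, the vector $\ket{\mathsf{SYT}^i_\lambda}\ket{\mathsf{SSYT}^j_\lambda}$ lies in the $\lambda$-isotypic component $W_\lambda$ of this diagonal $S_n$-action. Let $P_\lambda$ be the isotypic projector; it lies in the span of the $U_\pi$.

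First step: observe that $I^{out}\otimes\Pi^{in}_{\mathsf{sym}}$ commutes with $U^{out}_\pi\otimes U^{in}_\pi$ for every $\pi$, since $\Pi^{in}_{\mathsf{sym}}$ commutes with $U^{in}_\pi$. Hence it commutes with $P_\lambda$. Therefore the projected vector $v := (I^{out}\otimes \Pi^{in}_{\mathsf{sym}})\ket{\mathsf{SYT}^i_\lambda}\ket{\mathsf{SSYT}^j_\lambda}$ still lies in $W_\lambda$.

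Second step: locate $v$ using the $\mathcal{H}^{out}\otimes\mathcal{H}^{in}$ decomposition. By the display preceding the lemma, $v$ lies in $\bigoplus_{\lambda_a\vdash_2 n}(S_{\lambda_a}\otimes V^2_{\lambda_a})\otimes(S_{(n,0)}\otimes V^2_{(n,0)})$, and each summand is invariant under the diagonal action. On the $\lambda_a$ summand, $U^{out}_\pi\otimes U^{in}_\pi$ acts as $p_{\lambda_a}(\pi)\otimes I\otimes p_{(n,0)}(\pi)\otimes I$. Since $S_{(n,0)}$ is the trivial representation, $p_{(n,0)}(\pi)=1$, and so this summand is a direct sum of copies of $S_{\lambda_a}$. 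In other words, the $\lambda_a$ summand is contained in the $\lambda_a$-isotypic component $W_{\lambda_a}$ of the diagonal action.

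Third step: conclude from orthogonality. Distinct isotypic components are orthogonal, because the action is unitary and $P_\mu P_\nu=\delta_{\mu\nu}P_\mu$. Applying $P_\lambda$ to $v$ therefore kills every summand with $\lambda_a\neq\lambda$ and fixes the summand with $\lambda_a=\lambda$. Since $P_\lambda v=v$, all coefficients $\alpha^{i_a,i_b,j_a,j_b}_{i,j,\lambda_a,\lambda_b}$ with $\lambda_a\neq\lambda$ vanish in the expansion, which is exactly the claimed formula. In particular, if $\lambda_3>0$ there is no $\lambda_a\vdash_2 n$ equal to $\lambda$, so $v=0$; this is the annihilation statement used afterwards.

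The only delicate point, which I expect to be the main obstacle, is making the first identification rigorous. One must check that the isomorphism $(\mathbb{C}^4)^{\otimes n}\cong(\mathbb{C}^2)^{\otimes n}\otimes(\mathbb{C}^2)^{\otimes n}$ (a reordering of tensor factors) intertwines $U_\pi$ with $U^{out}_\pi\otimes U^{in}_\pi$. One must also check that the Young--Yamanouchi/Gelfand--Tsetlin labelled vectors genuinely lie in $W_\lambda$; this holds because the Schur--Weyl decomposition is precisely the isotypic decomposition for $S_n$. Once these are in place, the rest is formal.
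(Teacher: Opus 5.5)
Your proof is correct and is essentially the paper's argument. The paper uses the explicit central idempotent $C_\lambda\sum_{\pi\in S_n}\overline{\chi_\lambda(\pi)}\,\pi$ (your $P_\lambda$), commutes it past $I^{out}\otimes\Pi^{in}_{\mathsf{sym}}$ via the diagonal action $U_\pi\otimes U_\pi$, and notes that on each $\lambda_a$ summand $\pi$ acts only through $p_{\lambda_a}(\pi)$ because $S_{(n,0)}$ is trivial, so the idempotent kills the summands with $\lambda_a\neq\lambda$ and fixes the one with $\lambda_a=\lambda$.
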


Using Lemma \ref{lem:annihilate}, we get that $(I^{out} \otimes \Pi_{\mathsf{sym}}^{in})\ket{\mathsf{SYT^i}_\lambda} \ket{\mathsf{SSYT^j}_\lambda} = 0$ if $\lambda \vdash_4\, n$ such that $\lambda_3 > 0$.
\begin{theorem}\label{thm:annihilate}
    Consider a mixed Pauli channel $\mathcal{N}$ with error probabilities $(p_I, p_X, p_Y, p_Z)$, with the Choi matrix $J$. With respect to the Young-Yamanouchi basis for $S_{\lambda}$ and the Gelfand-Tsetlin basis (corresponding to the Bell basis) for $V^4_{\lambda}$, we have that
    \begin{align*}
        J^{\otimes n} &= \bigoplus_{\lambda\vdash_4\,n}~\mathbb{I}_{S_{\lambda}} \otimes q^4_{\lambda}(J)
    \end{align*}
    where $q_{\lambda}(J)$ is a diagonal matrix in the Gelfand-Tsetlin basis. On projecting the input to $n$ copies of the channel on the symmetric subspace,
    \begin{align*}
        (I^{out} \otimes \Pi_{\mathsf{sym}}^{in}) J^{\otimes n} (I^{out} \otimes \Pi_{\mathsf{sym}}^{in}) &= \bigoplus_{\lambda\vdash_2\,n}~(I^{out} \otimes \Pi_{\mathsf{sym}}^{in})(\mathbb{I}_{S_{\lambda}} \otimes q^4_{\lambda}(J))(I^{out} \otimes \Pi_{\mathsf{sym}}^{in})
    \end{align*}
    where $\Pi_{\mathsf{sym}}^{in}$ denotes the projector onto the symmetric subspace on the input register.
\end{theorem}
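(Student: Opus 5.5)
The first claim is already established: $J^{\otimes n}$ commutes with every $U_\pi$ acting diagonally on the $n$ copies of $\mathbb{C}^4$. By Schur--Weyl duality it therefore lies in the span of $\{U^{\otimes n}\}$ and has the form $\bigoplus_{\lambda\vdash_4 n}\mathbb{I}_{S_\lambda}\otimes q^4_\lambda(J)$, where $q^4_\lambda$ is extended by continuity from $GL(4)$ to all of $L(\mathbb{C}^4)$. Since $J$ is diagonal in the Bell basis, Theorem 2.3 of \cite{Mol06} (the lemma stated above) shows that $q^4_\lambda(J)$ is diagonal in the associated Gelfand--Tsetlin basis. The eigenvectors of $J^{\otimes n}$ are therefore exactly the vectors $\ket{\mathsf{SYT}^i_\lambda}\ket{\mathsf{SSYT}^j_\lambda}$, with eigenvalues $\gamma^{i,j}_\lambda$, and we can write $J^{\otimes n}=\sum_{\lambda\vdash_4 n}\sum_{i,j}\gamma^{i,j}_\lambda\ket{v^{i,j}_\lambda}\bra{v^{i,j}_\lambda}$.

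For the second claim, conjugate this spectral decomposition by $P:=I^{out}\otimes\Pi^{in}_{\mathsf{sym}}$ term by term:
\[
P J^{\otimes n}P=\sum_{\lambda\vdash_4 n}\sum_{i,j}\gamma^{i,j}_\lambda\,(P\ket{v^{i,j}_\lambda})(P\ket{v^{i,j}_\lambda})^\dagger .
\]
By Lemma \ref{lem:annihilate}, $P\ket{v^{i,j}_\lambda}$ is supported only on $\lambda_a=\lambda$, $\lambda_b=(n,0)$ in the output/input Schur decomposition. Here $\lambda$ is a partition with at most four rows, and $\lambda_a$ must be a partition into at most two rows. If $\lambda_3>0$, no $\lambda_a\vdash_2 n$ equals $\lambda$, so the sum is empty and $P\ket{v^{i,j}_\lambda}=0$. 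Every block with $\lambda_3>0$ is therefore annihilated. The surviving $\lambda$ are exactly the partitions of $n$ into at most two parts, which we identify with $\lambda\vdash_2 n$. Regrouping the surviving terms by $\lambda$ reassembles $\sum_{i,j}\gamma^{i,j}_\lambda\ket{v^{i,j}_\lambda}\bra{v^{i,j}_\lambda}=\mathbb{I}_{S_\lambda}\otimes q^4_\lambda(J)$ on the $\lambda$-isotypic block, and sandwiching it by $P$ gives the stated direct sum over $\lambda\vdash_2 n$.

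The main obstacle is justifying that the sum remains a direct sum after conjugation. Because $P$ does not commute with the $\lambda\vdash_4 n$ decomposition of $(\mathbb{C}^4)^{\otimes n}$, we must check that the surviving pieces for different $\lambda$ have orthogonal supports. Lemma \ref{lem:annihilate} gives this directly: the image of the $\lambda$ block lies in $(S_\lambda\otimes V^2_\lambda)^{out}\otimes\mathrm{Sym}^{in}$. For distinct $\lambda$ these output isotypic components are orthogonal, since they are distinct irreps of the unitary action of $S_n$ and $U(2)$ on $(\mathbb{C}^2)^{\otimes n}$. Hence the cross terms vanish and the decomposition is genuinely block diagonal. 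This completes the argument, apart from the convention of reading $\lambda\vdash_2 n$ as the four-row partitions with $\lambda_3=\lambda_4=0$.
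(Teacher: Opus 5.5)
Your proof is correct and follows the paper's own argument. You expand $J^{\otimes n}$ in the Schur eigenbasis, conjugate term by term by $P$, discard the $\lambda_3>0$ terms via Lemma \ref{lem:annihilate}, and regroup; your orthogonality check also justifies the $\bigoplus$, which the paper leaves implicit.

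One correction to your motivation: $P$ \emph{does} commute with the diagonal action $U_\pi\otimes U_\pi$, which is exactly the identity used to prove Lemma \ref{lem:annihilate}. So it commutes with the isotypic projectors and preserves each $\lambda\vdash_4 n$ block, and the direct-sum structure is automatic rather than an obstacle.
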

\begin{proof}
    We have that
    \begin{align*}
        J^{\otimes n} &= \bigoplus_{\lambda\vdash_4\,n}~\mathbb{I}_{S_{\lambda}} \otimes q^4_{\lambda}(J) \\
        &= \bigoplus_{\lambda\vdash_4\,n} \sum_{i = 1}^{\mathsf{\dim(S_{\lambda})}}\sum_{j = 1}^{\mathsf{\dim(V^4_{\lambda})}} \gamma_{\lambda}^{i, j} \ket{\mathsf{SYT^i}_\lambda} \bra{\mathsf{SYT^i}_\lambda} \otimes \ket{\mathsf{SSYT^j}_\lambda}\bra{\mathsf{SSYT^j}_\lambda}
    \end{align*}
    Therefore,
    \begin{align*}
        &(I^{out} \otimes \Pi_{\mathsf{sym}}^{in}) J^{\otimes n} (I^{out} \otimes \Pi_{\mathsf{sym}}^{in}) \\
        =& \bigoplus_{\lambda\vdash_4\,n} \sum_{i = 1}^{\mathsf{\dim(S_{\lambda})}}\sum_{j = 1}^{\mathsf{\dim(V^4_{\lambda})}} \gamma_{\lambda}^{i, j} (I^{out} \otimes \Pi_{\mathsf{sym}}^{in})(\ket{\mathsf{SYT^i}_\lambda} \bra{\mathsf{SYT^i}_\lambda} \otimes \ket{\mathsf{SSYT^j}_\lambda}\bra{\mathsf{SSYT^j}_\lambda})(I^{out} \otimes \Pi_{\mathsf{sym}}^{in}) \\
        =& \bigoplus_{\lambda\vdash_2\,n} \sum_{i = 1}^{\mathsf{\dim(S_{\lambda})}}\sum_{j = 1}^{\mathsf{\dim(V^4_{\lambda})}} \gamma_{\lambda}^{i, j} (I^{out} \otimes \Pi_{\mathsf{sym}}^{in})(\ket{\mathsf{SYT^i}_\lambda} \bra{\mathsf{SYT^i}_\lambda} \otimes \ket{\mathsf{SSYT^j}_\lambda}\bra{\mathsf{SSYT^j}_\lambda})(I^{out} \otimes \Pi_{\mathsf{sym}}^{in}) \\
        =& \bigoplus_{\lambda\vdash_2\,n}~(I^{out} \otimes \Pi_{\mathsf{sym}}^{in})(\mathbb{I}_{S_{\lambda}} \otimes q^4_{\lambda}(J))(I^{out} \otimes \Pi_{\mathsf{sym}}^{in})
    \end{align*}
    where the second equality follows from Lemma \ref{lem:annihilate}.
\end{proof}
Note that we have restricted ourselves so far to channels with qubit inputs and outputs, but the analysis of Lemma \ref{lem:annihilate} and Theorem \ref{thm:annihilate} can be generalized to channels with arbitrary input and output dimensions in a straightforward way. In the remaining part of this section, the analysis will be specialized to the depolarizing channel.

We now show for the $(1-3p, p, p, p)$ depolarizing channel that the typical Pauli Kraus operators for $\mathcal{N}^{\otimes n}$ are spanned by a subset of the Schur basis Kraus operators, such that only a small number of them do not annihilate the symmetric subspace. Let $T_{\delta}^n$ be the set of strongly $\delta$-typical Pauli errors on $n$ qubits (see Appendix \ref{app:typical}) with respect to the probability distribution $(1-3p, p, p, p)$ induced by the depolarizing channel $\mathcal{N}$ on the single-qubit Pauli errors. Let $\mathcal{N}^c$ denote the complementary channel. For an input $\sigma$ on $n$ qubits, let $\mathsf{rank}((\mathcal{N}^c)^{\otimes n}(\sigma))$ denote the rank of the output of the complementary channel. Let $\mathsf{Schur}_n$ denote the set of Schur basis Kraus operators of $\mathcal{N}^{\otimes n}$.

For $\delta > 0$, let $\mathcal N_{T^n_\delta}$ denote the CP map obtained by removing all Kraus operators from $\mathcal N^{\otimes n}$ not from the set $T_\delta^n$. Since the probability of applying an error from $T_{\delta}^n$ goes to $1$ as $n\to \infty$ (using the unit probability property from Fact \ref{fact:proptypical}), we have that $\vert\vert\mathcal N_{T^n_\delta}-\mathcal N^{\otimes n}\vert \vert_{\diamond}\to 0$ as $n \to \infty$. For this map, we correspondingly also have the complementary map $\mathcal N^c_{T^n_\delta}$ obtained by first going to the Stinespring dilation and then tracing the output register instead of the ancilla register. For this, we have the following result:

\begin{theorem}\label{thm:rank}
    For $\delta > 0$, there is a sufficiently large $n$, and a positive constant $c$, the set $T_{\delta}^n$ belongs to the span of $S \subseteq \mathsf{Schur}_n$, such that only $O\left(n^2(1-2p)^n2^{n(H(1-3p, p, p, p)+c\delta)}\right)$ elements of $S$ do not annihilate the symmetric subspace. In particular, let $\rho$ be a convex combination of states in the symmetric subspace on $n$ qubits, then
    \begin{align*}
       \mathsf{rank}((\mathcal{N}^c_{T^n_\delta})(\rho)) \in O\left(n^2(1-2p)^n2^{n(H(1-3p, p, p, p) + c\delta)}\right)
        \ll 2^{n(H(1-3p, p, p, p) + c\delta)}
        \leq \mathsf{rank}\left((\mathcal{N}^c_{T^n_\delta})\left(\frac{\mathbb{I}}{2^n}\right)\right)
    \end{align*}
\end{theorem}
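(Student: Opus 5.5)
The plan is to pass from Pauli strings to the Schur basis Kraus operators by grouping the Bell-basis eigenvectors of $J^{\otimes n}$ according to \emph{type}.

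\textbf{Step 1: grouping by type.} For the depolarizing channel, $J$ is diagonal in the Bell basis with eigenvalues $(1-3p,p,p,p)$. Hence $J^{\otimes n}$ acts as a scalar on each type class: for a type $t=(t_I,t_X,t_Y,t_Z)$, the span $W_t$ of all Bell strings of type $t$ is an eigenspace with eigenvalue $(1-3p)^{t_I}p^{n-t_I}$. The space $W_t$ is the weight-$t$ subspace of $(\mathbb{C}^4)^{\otimes n}$ under the diagonal torus. In the Schur basis $\ket{\mathsf{SYT}^i_\lambda}\ket{\mathsf{SSYT}^j_\lambda}$ with Gelfand--Tsetlin labels taken relative to the Bell basis, $W_t$ is spanned exactly by the basis vectors whose SSYT has content $t$ (Theorem 2.3 of \cite{Mol06}). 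So every typical Pauli string, i.e.\ every vectorized Kraus operator of type $t\in T^n_\delta$, lies in the span of
\[
S=\{\text{Schur basis Kraus operators whose SSYT has content } t \text{ with } t \text{ typical}\}.
\]
This proves the span statement.

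\textbf{Step 2: counting the survivors.} By Lemma~\ref{lem:annihilate} and Theorem~\ref{thm:annihilate}, only elements with $\lambda\vdash_2 n$ (two rows) can fail to annihilate the symmetric subspace. For such $\lambda=(\lambda_1,\lambda_2)$, I count the pairs $(\mathsf{SYT},\mathsf{SSYT})$ with typical content.
\begin{itemize}
\item There are $n+1$ choices of $\lambda$.
\item $\dim S_\lambda\le\binom{n}{\lambda_2}$.
\item Two-row SSYTs of fixed content number at most $O(n)$ or so, polynomially many. (One should double-check that a two-row SSYT over four letters with given content has polynomially many fillings; I expect $O(n^2)$ at worst, absorbed in the prefactor.)
\end{itemize}
So the count is $\mathrm{poly}(n)\cdot\max_{\lambda}\binom{n}{\lambda_2}\cdot\#\{\text{typical types compatible with }\lambda\}$.

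The key constraint is that an SSYT with two rows and content $t$ forces a relation between $\lambda$ and $t$. The first row starts with letter $1$, which corresponds to $I$, with eigenvalue $1-3p$. Column strictness gives $\lambda_2\le n-\max_k t_k$ in general. I would also try to bound $\lambda_2\le t_1+t_2$-type quantities depending on letter order, aiming at
\[
\binom{n}{\lambda_2}\lesssim 2^{n(H(1-3p,p,p,p)+c\delta)}(1-2p)^n
\]
via a Stirling estimate. The factor $(1-2p)^n$ should arise from comparing $\binom{n}{\lambda_2}$ with a multinomial $\binom{n}{t}$ of a typical $t$. Concretely, the two-row $\mathrm{SU}(4)$-to-$\mathrm{SU}(2)$ restriction should make the dominant contribution equal to the number of typical strings times the probability-like factor $(1-2p)^n$ coming from the overlap structure.

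Rather than pure counting, an alternative and likely cleaner route is to bound the trace of the eigenvalue measure restricted to two-row $\lambda$ with typical content. Let $N$ be the number of survivors. Each survivor has eigenvalue $\approx 2^{-nH}$ up to $2^{\pm nc\delta}$, by typicality, and the total eigenvalue weight on them is $\Tr[(I\otimes\Pi^{in}_{\mathsf{sym}})J^{\otimes n}]$-type mass. This gives
\[
N\le 2^{n(H+c\delta)}\cdot \text{mass}.
\]
I expect to compute the mass as the weight $(n+1)\,\|\cdot\|$ of $J^{\otimes n}$ on two-row irreps. The step to check is whether this mass equals $\mathrm{poly}(n)(1-2p)^n$. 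Taking the trace of $(I\otimes\Pi_{\mathsf{sym}})J^{\otimes n}$ directly gives $(n+1)/2^n$, since the $\Pi_{\mathsf{sym}}$ weight of $J$ is $\Tr\Pi_{\mathsf{sym}}/2^n$. So the true quantity must instead be the weight of $J^{\otimes n}$ on the two-row isotypic components, which is $\sum_{\lambda\vdash_2 n}\dim S_\lambda\Tr q^4_\lambda(J)$. Using Schur polynomials, $s_\lambda(1-3p,p,p,p)$ with two rows is dominated by the largest eigenvalue times the second, yielding terms $\approx (1-2p)^n$ after Stirling. This Schur-polynomial asymptotic is the main obstacle, and I would attack it first via a Jacobi--Trudi or Weyl-character bound. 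Combining it with the typical eigenvalue size gives $O(n^2(1-2p)^n2^{n(H+c\delta)})$.

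\textbf{Step 3: the rank comparison.} The rank of $\mathcal N^c_{T^n_\delta}(\rho)$ is at most the number of Kraus operators in the chosen basis that act nontrivially on $\mathrm{supp}\,\rho\subseteq$ the symmetric subspace. Here I use that a Kraus basis change is an isometry on the environment, and restrict to the span of $S$ intersected with the typical subspace. This gives the upper bound.

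For the maximally mixed input, the complementary output is $\frac{1}{2^n}\big(\Tr(E_a^\dagger E_b)\big)_{a,b}$, which is diagonal with full support on the $|T^n_\delta|$ orthogonal Pauli strings. Hence
\[
\mathsf{rank}\Big(\mathcal N^c_{T^n_\delta}\big(\tfrac{\mathbb{I}}{2^n}\big)\Big)=|T^n_\delta|\ge 2^{n(H-c\delta)},
\]
and rescaling $c$ gives the stated inequality. The gap $\ll$ then follows from $1-2p<1$ for $p$ fixed and $\delta$ small.
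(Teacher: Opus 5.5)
Your architecture matches the paper's.
\begin{itemize}
\item Grouping Bell strings by type, and using that Gelfand--Tsetlin vectors are weight vectors, is Lemma~\ref{lem:wtvector} and the first half of Lemma~\ref{lem:rank}.
\item Bounding the two-row survivors by $p_{\min}^{-1}\le 2^{n(H+c\delta)}$ times the eigenvalue mass of $J^{\otimes n}$ on the two-row isotypic components is the second half of Lemma~\ref{lem:rank}.
\item Your rank step is equivalent to the paper's rewriting of the Stinespring isometry: a unitary change of Kraus representation inside each type class, then discarding the $B_j$ with $B_j\Pi_{\mathsf{sym}}=0$.
\end{itemize}
The gap is the one quantitative input, which is where the factor $(1-2p)^n$ comes from:
\[
\sum_{\lambda\vdash_2 n}\dim(S_\lambda)\,s_\lambda(1-3p,p,p,p)=O(n^2)(1-2p)^n .
\]
You flag this as the main obstacle and support it only by the heuristic that a two-row Schur polynomial is ``dominated by the largest eigenvalue times the second''. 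That needs uniform control of all the other monomials in $s_\lambda$. Without it, the mass route yields only the trivial bound, and your direct-count route was also left unfinished. The paper proves the estimate as Lemma~\ref{lem:two_row_prob}, via Jacobi's bialternant formula in a confluent form (columns $C(x_1),C'(x_1),C''(x_1)$), because three of the four eigenvalues coincide.

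There is a short elementary proof. Expand $s_\lambda=\sum_{\mathrm{SSYT}}x^{\mathrm{content}}$, weighting the letter $1$ by $x_0=1-3p$. All $1$'s lie in the first row, and at most $\binom{k+2}{2}\binom{\lambda_2+2}{2}$ tableaux have exactly $\lambda_1-k$ ones. So with $r=p/(1-3p)<1$ (i.e.\ $p<1/4$),
\[
s_\lambda(x_0,x_1,x_1,x_1)\le\binom{\lambda_2+2}{2}x_0^{\lambda_1}x_1^{\lambda_2}\sum_{k\ge 0}\binom{k+2}{2}r^k=O(n^2)\,x_0^{\lambda_1}x_1^{\lambda_2}.
\]
Then $\dim S_\lambda\le\binom{n}{\lambda_2}$ and the binomial theorem give $O(n^2)(x_0+x_1)^n=O(n^2)(1-2p)^n$.

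Alternatively, the direct count you abandoned closes, and gives a sharper bound.
\begin{itemize}
\item Typical content forces $\lambda_1\ge t_I\ge(1-3p-\delta)n$.
\item For two-row $\lambda$, $\dim V^4_\lambda=O(n^5)$.
\item Hence there are at most $\mathrm{poly}(n)\,2^{n h(3p+\delta)}$ survivors, with $h$ the binary entropy and $3p+\delta\le 1/2$.
\item Since $H(1-3p,p,p,p)=h(3p)+3p\log_2 3$ and $3p\log_2 3+\log_2(1-2p)>0$ for $0<p\le 1/4$, this is exponentially below $(1-2p)^n2^{n(H+c\delta)}$. No Stirling comparison producing $(1-2p)^n$ is needed.
\end{itemize}

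Finally, on the last inequality: $\mathsf{rank}=|T^n_\delta|\ge(1-\epsilon)2^{n(H-c\delta)}$ cannot be turned into $2^{n(H+c\delta)}$ by ``rescaling $c$''. What you can prove, and all the $\ll$ needs, is the comparison with $(1-\epsilon)2^{n(H-c\delta)}$, valid once $2c\delta<\log_2\frac{1}{1-2p}$. The paper's own proof addresses only the upper bound.
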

\begin{proof}
    By using Lemma \ref{lem:rank} for the set $T_{\delta}^n$, and the equipartition property of $T_{\delta}^n$ from Fact \ref{fact:proptypical}, we get that $T_{\delta}^n$ is spanned by $S \subseteq \mathsf{Schur}_n$ such that the number of elements of $S$ do not annihilate the symmetric subspace is $O\left(n^2(1-2p)^n2^{n(H(1-3p, p, p, p)+c\delta)}\right)$. We will denote the non-annihilating subset of $S$ as $S'$. We now establish the second part of the claim.

    Let $A$ be the Stinespring dilation of $\mathcal{N}^{c}_{T^n_\delta}$. Let $T_{\delta}^n$ consist of the Pauli Kraus operators $\{A_k\}_{k = 1}^{|T_{\delta}^n|}$. Let $S$ consist of Kraus operators $\{B_j\}_{j = 1}^{|S|}$ where the first $|S'|$ operators belong to $S'$. Then we know from the spanning property of $S$ that
    \begin{align*}
        A_k &= \sum_{j = 1}^{|S|} a_{k,j} B_j
    \end{align*}
    Then we can think of the channel action on $\rho$ in this limit $n \rightarrow \infty, \delta \rightarrow 0$ as
    \begin{align*}
        A\rho A^{\dagger} &= \sum_{k_1, k_2 = 1}^{|T_{\delta}^n|} A_{k_1} \rho A_{k_2}^{\dagger} \otimes \ket{k_1}\bra{k_2} \\
        &= \sum_{k_1, k_2 = 1}^{|T_{\delta}^n|} \left(\sum_{j_1, j_2 = 1}^{|S|} a_{{k_1}, j_1}a_{k_2,j_2}^{\ast} B_{j_1} \rho B_{j_2}^{\dagger}\right) \otimes \ket{k_1}\bra{k_2} \\
        &= \sum_{j_1, j_2 = 1}^{|S|} B_{j_1} \rho B_{j_2}^{\dagger} \otimes \left(\sum_{k_1 = 1}^{|T_{\delta}^n|} a_{{k_1}, j_1} \ket{k_1}\right)\left(\sum_{k_2 = 1}^{|T_{\delta}^n|} a_{{k_2}, j_2}^{\ast} \bra{k_2}\right) \\
        &= \sum_{j_1, j_2 = 1}^{|S'|} B_{j_1} \rho B_{j_2}^{\dagger} \otimes \left(\sum_{k_1 = 1}^{|T_{\delta}^n|} a_{{k_1}, j_1} \ket{k_1}\right)\left(\sum_{k_2 = 1}^{|T_{\delta}^n|} a_{{k_2}, j_2}^{\ast} \bra{k_2}\right)
    \end{align*}
    where the last equality follows because only the first $|S'|$ operators in $S$ do not annihilate the symmetric subspace. Define the vectors $\ket{j} = \sum_{k = 1}^{|T_{\delta}^n|} a_{{k}, j} \ket{k}$ for $1 \leq j \leq |S'|$, and $B = \sum_{j=1}^{|S'|} B_j \otimes \ket{j}$. Then we have that
    \begin{align}\label{eq:sprimeaction}
        A\rho A^{\dagger} = B \rho B^{\dagger} = \sum_{j_1, j_2 = 1}^{|S'|} B_{j_1} \rho B_{j_2}^{\dagger} \otimes \ket{j_1}\bra{j_2}
    \end{align}
    On tracing the output register of Equation \ref{eq:sprimeaction}, the environment register is supported on the span of $\ket{j_1}\bra{j_2}$ for $1 \leq j_1, j_2 \leq |S'|$. Therefore, we get that
    \begin{align*}
        \mathsf{rank}((\mathcal{N}^c_{T^n_\delta})(\rho)) \leq |S'| \in O\left(n^2(1-2p)^n2^{n(H(1-3p, p, p, p)+c\delta)}\right) &\qedhere
    \end{align*}
\end{proof}

\section*{Acknowledgments}
AA is supported in part by a Cheriton Graduate Scholarship from the School of Computer Science at the University of Waterloo and NRC project number AQC-WAT108859. ARK acknowledges the support of the CryptoWorks 21 program and NRC project number AQC-WAT108859. AA and ARK received partial support from the NSERC Alliance Consortia Quantum grants (ALLRP 578455-22). DL is supported under RGPIN-2024-03823, LS is supported under RGPIN-2025-04875, and GS is supported under NSERC-NSF alliance grant ALLRP-586858-2023 and NSERC Discovery grant RGPIN-2025-02094. PS acknowledges the support of the Natural Sciences and Engineering
Research Council of Canada (NSERC)(RGPIN-2023-03731, ALLRP-578455-2022). Nous remercions le Conseil de recherches en sciences naturelles et en génie du Canada (CRSNG) de son soutien (RGPIN-2023-03731,ALLRP-578455-2022). PS is also supported by Institute for Quantum Computing, and the Mike and Ophelia Lazardis Fellowship. SL acknowledges the support of NSERC Alliance Horizon Europe Quantum 2021. LLMs were used to help write parts of the code to produce numerical results. The AWS cloud was also used for computing facilities.

Research at Perimeter Institute and IQC is supported by the Government of Canada through Innovation, Science and Economic Development
Canada, and by the Province of Ontario through the Ministry of Research, Innovation and Science.

\printbibliography

\appendix
\newcommand{\spec}{\mathrm{spec}}

\section{Strong Typicality}\label{app:typical}
In this section, we define and state some basic properties of strongly typical sets. See \cite{wilde2013quantum} for a detailed exposition on weak and strong typicality. Consider a set $X = \{\sigma_1, \ldots, \sigma_k\}$ and a probability distribution $p$ on $X$ which has support on all the elements of $X$. A sequence $x = (x_1, \ldots, x_n)$ where $x_i \in X$ is called a strongly $\delta$-typical sequence if
\begin{align*}
    \left|\Pr_{i \sim [n]}[x_i = \sigma_j] - p(\sigma_j)\right| &\leq \delta &\forall~~j \in [k]
\end{align*}
Clearly, if $x$ is a strongly $\delta$-typical sequence, then so are all its permutations. The set of all such strongly $\delta$-typical sequences is called the strongly $\delta$-typical set, denoted $T_{\delta}^n$. This set is then invariant under permutations. It also satisfies the following properties (see \cite{wilde2013quantum} for a proof):
\begin{fact}[Properties of strongly typical sets]\label{fact:proptypical}
    The set $T_{\delta}^n$ satisfies the following properties:
    \begin{enumerate}
        \item Unit probability: For all $\delta > 0, \epsilon \in (0,1)$, there exists $n_0 \in \mathbb{N}$ such that for all $n \geq n_0$,
        \begin{align*}
            \Pr_{x \sim X^n}[x \in T_{\delta}^n] \geq 1-\epsilon
        \end{align*}
        \item Cardinality: For all $\delta > 0, \epsilon \in (0,1)$, there exists $n_0 \in \mathbb{N}$ and a positive constant $c$ such that for all $n \geq n_0$,
        \begin{align*}
             (1-\epsilon) 2^{n(H(X)-c\delta)} \leq |T_{\delta}^n| \leq 2^{n(H(X)+c\delta)}
        \end{align*}
        \item Equipartition: For all $\delta > 0$, there exists $n_0 \in \mathbb{N}$ and a positive constant $c$ such that for all $n \geq n_0$ and $x \in T_{\delta}^n$,
        \begin{align*}
             \frac{1}{2^{n(H(X)+c\delta)}} \leq \Pr_{y \sim X^n}[y = x] \leq \frac{1}{2^{n(H(X)-c\delta)}}
        \end{align*}
    \end{enumerate}
\end{fact}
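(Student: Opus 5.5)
The plan is the standard method-of-types argument, proving the three properties in the order unit probability, equipartition, cardinality, since the cardinality bounds follow from the other two.

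\textbf{Unit probability.} For each $j \in [k]$, let $N_j(x) = |\{i : x_i = \sigma_j\}|$ for $x \sim X^n$ drawn i.i.d. Then $N_j$ is a sum of $n$ independent Bernoulli$(p(\sigma_j))$ variables, so $\mathbb{E}[N_j/n] = p(\sigma_j)$ and $\mathrm{Var}(N_j/n) = p(\sigma_j)(1-p(\sigma_j))/n \le 1/(4n)$. By Chebyshev's inequality, $\Pr[|N_j/n - p(\sigma_j)| > \delta] \le 1/(4n\delta^2)$. A union bound over the $k$ symbols gives $\Pr[x \notin T_\delta^n] \le k/(4n\delta^2)$. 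This is at most $\epsilon$ once $n \ge n_0 := \lceil k/(4\epsilon\delta^2)\rceil$. A Hoeffding bound would give an exponential rate, but it is not needed.

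\textbf{Equipartition.} For $x \in T_\delta^n$, write $\Pr[y = x] = \prod_j p(\sigma_j)^{N_j(x)}$, so that
\begin{align*}
-\tfrac{1}{n}\log \Pr[y=x] = \sum_{j} \tfrac{N_j(x)}{n}\,\bigl(-\log p(\sigma_j)\bigr).
\end{align*}
Comparing this with $H(X) = \sum_j p(\sigma_j)(-\log p(\sigma_j))$ and using $|N_j/n - p(\sigma_j)| \le \delta$, the difference is at most $\delta \sum_j |\log p(\sigma_j)|$. Set $c := \sum_j |\log p(\sigma_j)|$. This is finite precisely because $p$ has full support on $X$, which is why that hypothesis is in the statement. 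Exponentiating gives both inequalities, and in fact they hold for every $n$.

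\textbf{Cardinality.} The upper bound comes from summing the equipartition lower bound over $T_\delta^n$: $1 \ge \Pr[T_\delta^n] \ge |T_\delta^n|\, 2^{-n(H(X)+c\delta)}$. The lower bound combines unit probability with the equipartition upper bound: $1-\epsilon \le \Pr[T_\delta^n] \le |T_\delta^n|\, 2^{-n(H(X)-c\delta)}$ for $n \ge n_0$. The same constant $c$ works for all three items.

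I do not expect a genuine obstacle. The only point needing care is bookkeeping: the constant $c$ must depend only on $p$ and not on $n$ or $\delta$, and the full-support assumption is what guarantees $c < \infty$.
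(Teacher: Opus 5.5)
Your proof is correct. It is essentially the standard argument from the reference the paper cites; the paper gives no proof of its own and defers to Wilde's textbook, which proves the three properties the same way: Chebyshev with a union bound for unit probability, the bound $|{-\frac{1}{n}\log \Pr[y=x]} - H(X)| \le \delta\sum_j |\log p(\sigma_j)|$ for equipartition, and the cardinality bounds obtained by combining these two. One cosmetic point: in the degenerate case $|X|=1$ your $c$ equals $0$, but any positive $c$ then works.
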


\section{Technical Lemmas}\label{app:techlemmas}
\begin{theorem}[Central Idempotents for Specht modules, Chapter 2 Theorem 8 in \cite{Ser97}]\label{thm:idempotent}
        Let $\lambda_a,\lambda_b\vdash n$.
        For any $v\in S_{\lambda_a}$
        \[C_{\lambda_b}\sum_{\pi\in S_n} \overline{\chi_{\lambda_b}(\pi)}( \pi\cdot v)=\delta_{\lambda_a,\lambda_b} v\]

        Here, $\delta_{\lambda_a,\lambda_b}$ is the Kronecker delta function, i.e, it is 1 if $\lambda_a=\lambda_b$, and $0$ otherwise. $\chi_{\lambda}(\pi)$ is the character function for $S_\lambda$ evaluated at $\pi$, which is a complex number for every $\lambda\vdash n, \pi\in S_n$. $C_{\lambda}$ is also another complex number, for each $\lambda\vdash n$.
        
    \end{theorem}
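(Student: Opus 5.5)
The plan is to prove this with Schur's lemma together with the orthogonality relations for irreducible characters. I take $C_{\lambda_b} = \mathsf{dim}(S_{\lambda_b})/n!$, which is the normalization in \cite{Ser97}. Write $d_a = \mathsf{dim}(S_{\lambda_a})$ and $d_b = \mathsf{dim}(S_{\lambda_b})$. Let $p_{\lambda_a}$ be the representation of $S_n$ on $S_{\lambda_a}$, so that $\pi\cdot v = p_{\lambda_a}(\pi)v$. The object to study is the linear operator
\begin{align*}
    P = \frac{d_b}{n!}\sum_{\pi\in S_n}\overline{\chi_{\lambda_b}(\pi)}\, p_{\lambda_a}(\pi) \in L(S_{\lambda_a}),
\end{align*}
and the claim is exactly that $P = \delta_{\lambda_a,\lambda_b}\mathbb{I}_{S_{\lambda_a}}$.

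\textbf{Step 1: $P$ commutes with the group action.} Fix $\sigma\in S_n$. Then
\begin{align*}
    p_{\lambda_a}(\sigma)\, P\, p_{\lambda_a}(\sigma)^{-1} = \frac{d_b}{n!}\sum_{\pi}\overline{\chi_{\lambda_b}(\pi)}\, p_{\lambda_a}(\sigma\pi\sigma^{-1}).
\end{align*}
Reindex the sum by $\pi' = \sigma\pi\sigma^{-1}$. Since $\chi_{\lambda_b}$ is a class function, $\chi_{\lambda_b}(\pi) = \chi_{\lambda_b}(\pi')$, so the right-hand side equals $P$. Hence $P$ lies in the commutant of the irreducible representation $p_{\lambda_a}$.

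\textbf{Step 2: $P$ is a scalar.} By Schur's lemma over $\C$, $P = c\,\mathbb{I}_{S_{\lambda_a}}$ for some $c\in\C$.

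\textbf{Step 3: identify the scalar.} Take traces:
\begin{align*}
    c\, d_a = \Tr P = \frac{d_b}{n!}\sum_\pi \overline{\chi_{\lambda_b}(\pi)}\chi_{\lambda_a}(\pi) = d_b\,\langle \chi_{\lambda_a},\chi_{\lambda_b}\rangle = d_b\,\delta_{\lambda_a,\lambda_b}.
\end{align*}
The last equality is the first orthogonality relation, using that distinct partitions label non-isomorphic Specht modules.
\begin{itemize}
    \item If $\lambda_a\neq\lambda_b$, then $c=0$.
    \item If $\lambda_a=\lambda_b$, then $d_a=d_b$ and so $c=1$.
\end{itemize}
Applying $P = c\,\mathbb{I}_{S_{\lambda_a}}$ to $v$ gives the statement.

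The only nontrivial input is the character orthogonality relation. I would cite it from \cite{Ser97} rather than reprove it. If a self-contained argument were wanted, it follows by applying the same averaging trick to $\sum_\pi p_{\lambda_a}(\pi)\,X\,p_{\lambda_b}(\pi^{-1})$ for an arbitrary linear map $X$, then using Schur's lemma and taking matrix entries.

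Since characters of $S_n$ are real, the complex conjugation plays no role here. I would remark on this, because elsewhere in the paper the formula is applied to project onto the trivial and the $\lambda_b=(n,0)$ isotypic components.
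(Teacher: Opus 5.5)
Your argument is correct and is essentially the proof behind the citation. The paper gives no proof of its own and simply invokes \cite{Ser97} (Ch.~2, Thm.~8). That theorem is proved exactly as you do: via Prop.~6 there, an average $\sum_{\pi} f(\pi)\,p_{\lambda_a}(\pi)$ with $f$ a class function is, by Schur's lemma, a homothety whose ratio is read off from the trace, and this is combined with character orthogonality under the same normalization $C_{\lambda_b}=\mathsf{dim}(S_{\lambda_b})/n!$.

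The only inaccuracy is in your closing side remark. In the proof of Lemma~\ref{lem:annihilate}, the idempotent is applied with $\lambda_b=\lambda$, a partition with up to four rows, acting on vectors of $S_{\lambda_a}$ with $\lambda_a\vdash_2 n$. The trivial module $S_{(n,0)}$ of the input register enters only through $\pi\cdot v=v$. Also, ``trivial'' and ``$(n,0)$'' name the same module, not two different ones.
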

\annihilate*

\begin{proof}
    We use the central idempotent for $S_\lambda$ from Theorem \ref{thm:idempotent}. For any vector of the form $\ket{\mathsf{SYT}_{\lambda}^{i}}\otimes \ket{\mathsf{SSYT}_{\lambda}^{j}}$, we have that 
    \[C_{\lambda}\sum_{\pi\in S_n} \overline{\chi_{\lambda}(\pi)}( \pi\cdot \ket{\mathsf{SYT}_{\lambda}^{i}}\otimes \ket{\mathsf{SSYT}_{\lambda}^{j}}  )= \ket{\mathsf{SYT}_{\lambda}^{i}}\otimes \ket{\mathsf{SSYT}_{\lambda}^{j}}\]
    
    Now, when considering the isomorphism $(\mathbb C^4)^{\otimes n}\simeq (\mathbb C^2)^{\otimes n}\otimes (\mathbb C^2)^{\otimes n}$, the action of permuting the $\mathbb C^4$ tensor factors through $\pi\in S_n$ is equivalent to simultaneously permuting the $\mathbb C^2$ factors with $\pi$, for both copies of $(\mathbb C^2)^{\otimes n}$. Hence, if we rewrite vectors in $(\mathbb C^4)^{\otimes n}$ as linear combinations of vectors $v\otimes w \in (\mathbb C^2)^{\otimes n}\otimes (\mathbb C^2)^{\otimes n}$, then the action of $\pi$ can now be written as $(\pi \cdot v)\otimes (\pi\cdot w)$. On $(\mathbb C^2)^{\otimes n}$, the action of any permutation $\pi$ commutes with the projection onto the symmetric subspace. That is, for any $v\in (\mathbb C^2)^{\otimes n}$, we have $\Pi_{\mathsf{sym}} (\pi\cdot v)=\pi\cdot (\Pi_{\mathsf{sym}} v)=\Pi_{\mathsf{sym}}v$. Thus, we have that $\pi \cdot (I^{out}\otimes \Pi_{\mathsf{sym}}^{in}) = U_\pi I^{out} \otimes U_{\pi} \Pi_{\mathsf{sym}}^{in} = I^{out} U_\pi \otimes \Pi_{\mathsf{sym}}^{in} U_{\pi} = (I^{out}\otimes \Pi_{\mathsf{sym}}^{in}) (U_{\pi} \otimes U_{\pi})$. Therefore,
    \begin{align*}
        \pi \cdot ((I^{out}\otimes \Pi_{\mathsf{sym}}^{in}) \ket{\mathsf{SYT}_{\lambda}^{i}}\otimes \ket{\mathsf{SSYT}_{\lambda}^{j}}) &= (I^{out}\otimes \Pi_{\mathsf{sym}}^{in})(\pi \cdot (\ket{\mathsf{SYT}_{\lambda}^{i}}\otimes \ket{\mathsf{SSYT}_{\lambda}^{j}}))
    \end{align*}

    On the other hand, using that $S_{(n,0)}\otimes V^2_{(n,0)} $ is the symmetric subspace,
    \begin{align*}
    &(I^{out}\otimes \Pi_{\mathsf{sym}}^{in}) \ket{\mathsf{SYT}_{\lambda}^{i}}\otimes \ket{\mathsf{SSYT}_{\lambda}^{j}}\\
    = &\sum_{\substack{\lambda_a \vdash_2\, n \\ \lambda_b=(n,0)}} \sum_{\substack{i_a \in [\mathsf{dim}(S_{\lambda_a})] \\i_b \in [\mathsf{dim}(S_{\lambda_b})]}}  \sum_{\substack{j_a \in [\mathsf{dim}(V^2_{\lambda_a})] \\j_b \in [\mathsf{dim}(V^2_{\lambda_b})]}}\alpha_{i, j, \lambda_a, \lambda_b}^{i_a, i_b, j_a, j_b} \ket{\mathsf{SYT}_{\lambda_a}^{i_a}} \ket{\mathsf{SSYT}_{\lambda_a}^{j_a}} \ket{\mathsf{SYT}_{\lambda_b}^{i_b}} \ket{\mathsf{SSYT}_{\lambda_b}^{j_b}}
    \end{align*}
    Using the fact that $\pi\cdot v=v$ for any $v\in S_{(n,0)}$, we get that for all $\pi \in S_n$
    \begin{align*}
    &\pi \cdot ((I^{out}\otimes \Pi_{\mathsf{sym}}^{in}) \ket{\mathsf{SYT}_{\lambda}^{i}}\otimes \ket{\mathsf{SSYT}_{\lambda}^{j}})\\
    = &\sum_{\substack{\lambda_a \vdash_2\, n \\ \lambda_b=(n,0)}} \sum_{\substack{i_a \in [\mathsf{dim}(S_{\lambda_a})] \\i_b \in [\mathsf{dim}(S_{\lambda_b})]}}  \sum_{\substack{j_a \in [\mathsf{dim}(V^2_{\lambda_a})] \\j_b \in [\mathsf{dim}(V^2_{\lambda_b})]}}\alpha_{i, j, \lambda_a, \lambda_b}^{i_a, i_b, j_a, j_b} (\pi \cdot \ket{\mathsf{SYT}_{\lambda_a}^{i_a}}) \ket{\mathsf{SSYT}_{\lambda_a}^{j_a}} \ket{\mathsf{SYT}_{\lambda_b}^{i_b}} \ket{\mathsf{SSYT}_{\lambda_b}^{j_b}}
    \end{align*}
Then, using the central idempotent,
    \begin{align*}
    &(I^{out}\otimes \Pi_{\mathsf{sym}}^{in})\ket{\mathsf{SYT}_{\lambda}^{i}}\otimes \ket{\mathsf{SSYT}_{\lambda}^{j}} \\
    =&~(I^{out}\otimes \Pi_{\mathsf{sym}}^{in})\left(C_{\lambda}\sum_{\pi\in S_n} \overline{\chi_{\lambda}(\pi)}( \pi\cdot \ket{\mathsf{SYT}_{\lambda}^{i}}\otimes \ket{\mathsf{SSYT}_{\lambda}^{j}})\right) \\
    =&~C_{\lambda}\sum_{\pi\in S_n} \overline{\chi_{\lambda}(\pi)}( \pi\cdot((I^{out}\otimes \Pi_{\mathsf{sym}}^{in}) \ket{\mathsf{SYT}_{\lambda}^{i}}\otimes \ket{\mathsf{SSYT}_{\lambda}^{j}})  ) \\
    =& \sum_{\substack{\lambda_a \vdash_2\, n \\ \lambda_b=(n,0)}} \sum_{\substack{i_a \in [\mathsf{dim}(S_{\lambda_a})] \\i_b \in [\mathsf{dim}(S_{\lambda_b})]}}  \sum_{\substack{j_a \in [\mathsf{dim}(V^2_{\lambda_a})] \\j_b \in [\mathsf{dim}(V^2_{\lambda_b})]}}\alpha_{i, j, \lambda_a, \lambda_b}^{i_a, i_b, j_a, j_b} \left(C_{\lambda}\sum_{\pi\in S_n} \overline{\chi_{\lambda}(\pi)}(\pi\cdot \ket{\mathsf{SYT}_{\lambda_a}^{i_a}})\right)\ket{\mathsf{SSYT}_{\lambda_a}^{j_a}} \ket{\mathsf{SYT}_{\lambda_b}^{i_b}}\ket{\mathsf{SSYT}_{\lambda_b}^{j_b}}  \\
    =& \sum_{\substack{i_a \in [\mathsf{dim}(S_{\lambda_a})] \\i_b \in [\mathsf{dim}(S_{\lambda_b})]}}  \sum_{\substack{j_a \in [\mathsf{dim}(V^2_{\lambda_a})] \\j_b \in [\mathsf{dim}(V^2_{\lambda_b})]}}\alpha_{i, j, \lambda_a, \lambda_b}^{i_a, i_b, j_a, j_b} \ket{\mathsf{SYT}_{\lambda_a}^{i_a}} \ket{\mathsf{SSYT}_{\lambda_a}^{j_a}} \ket{\mathsf{SYT}_{\lambda_b}^{i_b}}\ket{\mathsf{SSYT}_{\lambda_b}^{j_b}} 
    \end{align*}
    where $\lambda_a=\lambda$ and $\lambda_b=(n,0)$ in the last line. We used that $C_{\lambda}\sum_{\pi\in S_n} \overline{\chi_{\lambda}(\pi)}(\pi\cdot \ket{\mathsf{SYT}_{\lambda_a}^{i_a}})$ is $0$ if $\lambda_a\ne \lambda$, and is $\ket{\mathsf{SYT}_{\lambda_a}^{i_a}}$ otherwise.
\end{proof}

\begin{lemma}\label{lem:two_row_prob}
    Let $\rho\in D(\mathbb C^4)$ have spectrum $(1-3p,p,p,p)$, with $p<0.2$. Consider the state $\rho^{\otimes n}$. Suppose we perform a projective measurement corresponding to orthogonal subspaces $S_\lambda\otimes V^4_{\lambda}$, which outputs the label $\lambda\vdash_4\,n$ corresponding to the irrep we measured. The probability of measuring $\lambda$ with at most $2$ rows is at most $O(n^2)\cdot (1-2p)^n$.
\end{lemma}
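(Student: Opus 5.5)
The plan is to compute the probability of each outcome $\lambda$ exactly using Schur--Weyl duality, and then bound the sum over two-row diagrams by a binomial expansion of $(1-2p)^n$. By the decomposition $\rho^{\otimes n}=\bigoplus_{\lambda\vdash_4 n}\mathbb{I}_{S_\lambda}\otimes q^4_\lambda(\rho)$, the probability of outcome $\lambda$ is
\[
\Pr[\lambda]=\mathsf{dim}(S_\lambda)\,\Tr\, q^4_\lambda(\rho)=\mathsf{dim}(S_\lambda)\, s_\lambda(x_1,x_2,x_3,x_4),
\]
where $s_\lambda$ is the Schur polynomial and $(x_1,\dots,x_4)=(1-3p,p,p,p)$ is the spectrum. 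Since $p<0.2$ we have $1-3p> p$, so $x_1$ is the largest eigenvalue.

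For a two-row $\lambda=(\lambda_1,\lambda_2,0,0)$ I will work in the Gelfand--Tsetlin basis of an eigenbasis of $\rho$. By the lemma from \cite{Mol06}, $q^4_\lambda(\rho)$ is diagonal there, with eigenvalues the monomials $x^{T}$ over SSYTs $T$ of shape $\lambda$ with entries in $\{1,2,3,4\}$. Every SSYT of shape $\lambda$ has a strictly increasing column in each of the $\lambda_2$ two-box columns, so at most $\lambda_1$ entries can equal $1$. Hence every monomial is at most $x_1^{\lambda_1}p^{\lambda_2}=(1-3p)^{n-\lambda_2}p^{\lambda_2}$, and
\[
\Pr[\lambda]\le \mathsf{dim}(S_\lambda)\,\mathsf{dim}(V^4_\lambda)\,(1-3p)^{n-\lambda_2}p^{\lambda_2}.
\]
The hook length formula gives $\mathsf{dim}(S_\lambda)\le\binom{n}{\lambda_2}$, and Weyl's dimension formula shows $\mathsf{dim}(V^4_\lambda)$ is polynomial in $n$. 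Summing over $\lambda_2$, the binomial part satisfies
\[
\sum_{\lambda_2}\binom{n}{\lambda_2}(1-3p)^{n-\lambda_2}p^{\lambda_2}\le (1-3p+p)^n=(1-2p)^n .
\]
This already gives $\mathrm{poly}(n)\cdot(1-2p)^n$. That bound suffices for the exponential comparison in Theorem~\ref{thm:rank}, since $1-2p<1$ while the cardinality exponent is unchanged.

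The main obstacle is sharpening the polynomial factor to the claimed $O(n^2)$, because the crude bound $\mathsf{dim}(V^4_\lambda)=O(n^4)$ is too lossy. To fix this I would not bound every monomial by the maximum, but sum the monomials exactly by conditioning on the number $k$ of entries equal to $1$. All $1$'s lie in the first row, and the remaining entries form an SSYT in the letters $\{2,3,4\}$ whose weight is $p^{n-k}$. Their number is $\mathsf{dim}$ of a skew $GL(3)$-module, so
\[
s_\lambda=\sum_{k\le\lambda_1}(1-3p)^k p^{n-k}N_k,\qquad N_k\le O(n^2)\binom{n-k}{\lambda_2}^{0}\cdot(\text{count}),
\]
where $N_k$ is the number of such fillings. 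I would then try to show that, after multiplying by $\mathsf{dim}(S_\lambda)$ and summing over $\lambda_2$ and $k$, these counts recombine into $\binom{n}{k}$-type coefficients. The aim is $\sum_k\binom nk (1-3p)^kp^{n-k}$ times the number of SSYT shapes of three letters with bounded rows, which is $O(n^2)$. The part I expect to be most delicate is this bookkeeping between $\mathsf{dim}(S_\lambda)$ and the skew counts. If it fails, I would fall back on the $\mathrm{poly}(n)(1-2p)^n$ bound and note that the exponent of $n$ is immaterial for Theorem~\ref{thm:rank}.
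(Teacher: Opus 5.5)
\textbf{There is a genuine gap: the proposal proves only a $\mathrm{poly}(n)\,(1-2p)^n$ bound, not the stated $O(n^2)(1-2p)^n$.}

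Your crude bound is correct, but it gives $\max_\lambda \mathsf{dim}(V^4_\lambda)\cdot(1-2p)^n$. For $\lambda=(\lambda_1,\lambda_2,0,0)$, Weyl's formula gives $\mathsf{dim}(V^4_\lambda)=(\lambda_1-\lambda_2+1)(\lambda_1+2)(\lambda_1+3)(\lambda_2+1)(\lambda_2+2)/12$. This is $\Theta(n^5)$ where the binomial weight concentrates ($\lambda_2\approx np/(1-2p)$), not $O(n^4)$. The sharpening paragraph is the only part that could reach $O(n^2)$, and it is not an argument. The displayed bound on $N_k$ is garbled: $\binom{n-k}{\lambda_2}^{0}$ is just $1$. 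The proposed recombination of $\mathsf{dim}(S_\lambda)$ with skew counts into $\binom nk$-type coefficients is neither carried out nor needed. Falling back on $\mathrm{poly}(n)$ suffices for the exponential separation in Theorem~\ref{thm:rank}, but it does not prove the lemma as stated. The $n^2$ is also not slack: the $e=0$ terms below already contribute order $n^2(1-2p)^n$.

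The missing idea is to stop bounding every monomial by the largest one and use $r:=p/(1-3p)<1$; indeed $r<1/2$ for $p<0.2$. Suppose an SSYT of shape $\lambda$ has $\lambda_1-e$ entries equal to $1$, necessarily the first cells of row one. Its monomial is then $(1-3p)^{\lambda_1}p^{\lambda_2}r^e$. The number of such tableaux is at most $\binom{e+2}{2}\binom{\lambda_2+2}{2}$: the remaining $e$ cells of row one and the $\lambda_2$ cells of row two are each weakly increasing words in $\{2,3,4\}$, and column strictness only lowers the count. Hence
\[
s_\lambda(1-3p,p,p,p)\le \binom{\lambda_2+2}{2}(1-3p)^{\lambda_1}p^{\lambda_2}\sum_{e\ge 0}\binom{e+2}{2}r^e=(1-r)^{-3}\binom{\lambda_2+2}{2}(1-3p)^{n-\lambda_2}p^{\lambda_2}.
\]
Summing against $\mathsf{dim}(S_\lambda)\le\binom{n}{\lambda_2}$ gives at most $(1-r)^{-3}\binom{n+2}{2}(1-2p)^n=O(n^2)(1-2p)^n$.

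With this step added, your route is a valid and more elementary alternative to the paper's. The paper instead evaluates the Jacobi bialternant at the repeated eigenvalue by a confluent (derivative) limit and bounds the resulting closed-form numerator. Your constant $(1-r)^{-3}=\bigl((1-3p)/(1-4p)\bigr)^3$ matches the $(1-4p)^{-3}$ that the paper obtains from its denominator.
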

\begin{proof}
We have \[\rho^{\otimes n}= \sum_{\lambda \vdash_4\, l} I_{S_\lambda}\otimes q_\lambda^4(\rho).\]
The probability of measuring a fixed partition $\lambda$ is $\dim(S_\lambda)\cdot \mathsf{tr}[q_\lambda^4(\rho)]$. The second factor here is the character function for $V^4_\lambda$ evaluated at $\rho$. This is known to be equal to $s_\lambda(\spec(\rho))$, where $\spec(\circ)$ denotes the spectrum, and $s_\lambda$ is the Schur polynomial for $\lambda$. 

To work with the Schur polynomial, we will use the Jacobi's bialternant formula \cite{StanleyEC2}. For at most 4 variables, we have that

    \[s_\lambda(x_0,x_1,x_2,x_3)=\frac{\det\begin{pmatrix}
        x_0^{\lambda_1+3}&x_1^{\lambda_1+3}&x_2^{\lambda_1+3}&x_3^{\lambda_1+3}\\
        x_0^{\lambda_2+2}&x_1^{\lambda_2+2}&x_2^{\lambda_2+2}&x_3^{\lambda_2+2}\\
        x_0^{\lambda_3+1}&x_1^{\lambda_3+1}&x_2^{\lambda_3+1}&x_3^{\lambda_3+1}\\
        x_0^{\lambda_4}&x_1^{\lambda_4}&x_2^{\lambda_4}&x_3^{\lambda_4}
    \end{pmatrix}}{\det\begin{pmatrix}
        x_0^{3}&x_1^{3}&x_2^{3}&x_3^{3}\\
        x_0^{2}&x_1^{2}&x_2^{2}&x_3^{2}\\
        x_0^{1}&x_1^{1}&x_2^{1}&x_3^{1}\\
        1&1&1&1
    \end{pmatrix}}\]

    Since we are mainly interested in finding the probability of having 2 rows, we can restrict to the case $\lambda_3=\lambda_4=0$.
    With this restriction, we have

     \[s_\lambda(x_0,x_1,x_2,x_3)=\frac{\det\begin{pmatrix}
        x_0^{\lambda_1+3}&x_1^{\lambda_1+3}&x_2^{\lambda_1+3}&x_3^{\lambda_1+3}\\
        x_0^{\lambda_2+2}&x_1^{\lambda_2+2}&x_2^{\lambda_2+2}&x_3^{\lambda_2+2}\\
        x_0&x_1&x_2&x_3\\
        1&1&1&1
    \end{pmatrix}}{\det\begin{pmatrix}
        x_0^{3}&x_1^{3}&x_2^{3}&x_3^{3}\\
        x_0^{2}&x_1^{2}&x_2^{2}&x_3^{2}\\
        x_0&x_1&x_2&x_3\\
        1&1&1&1
    \end{pmatrix}}\]

Note that the above definition only works as a formal polynomial, and directly evaluating the numerator and denominator at $(x_0,x_1,x_2,x_3)=(1-3p,p,p,p)$ fails as both the numerator and denominator are 0. 

To alleviate this, we proceed as follows. For both the numerator and denominator, notice that the columns are of the form $C(x_0), C(x_1),C(x_2)$and $C(x_3)$, where $C(x)\in \mathbb C[x]^4$ (note that $C$ is different for the numerator and the denominator). Via column operations, we can transform the columns to $C(x_0),C(x_1), C(x_2)-C(x_1)$ and $C(x_3)+C(x_2)-2C(x_1)$, which doesn't change the values of the determinant. Finally, we can add a new formal variable $h$, and divide the third column by $h$, and the 4th column by $2h^2$. This leads to a scalar factor of $2h^3$ outside the determinant in both the numerator and the denominator which cancels out.

Then, for both the numerator and denominator separately, setting $x_2=x_1-h$ and $x_3=x_1+h$ and taking the limit as $h\to 0$ makes the final columns as $C(x_0),C(x_1),C'(x_1),C''(x_1)$ by using the Taylor expansion for $C(x)$ around $x_1$.

Explicitly doing it for this fraction, we get
\begin{align*}
    s_\lambda(x_0,x_1,x_1,x_1) &=\frac{\det\begin{pmatrix}
        x_0^{\lambda_1+3}&x_1^{\lambda_1+3}&(\lambda_1+3)x_1^{\lambda_1+2}&(\lambda_1+3)(\lambda_1+2)x_1^{\lambda_1+1}\\
        x_0^{\lambda_2+2}&x_1^{\lambda_2+2}&(\lambda_2+2)x_1^{\lambda_2+1}&(\lambda_2+1)(\lambda_2+2)x_1^{\lambda_2}\\
        x_0&x_1&1&0\\
        1&1&0&0
    \end{pmatrix}}{\det\begin{pmatrix}
        x_0^{3}&x_1^{3}&3x_1^{2}&6x_{1}\\
        x_0^{2}&x_1^{2}&2x_1^{1}&2\\
        x_0&x_1&1&0\\
        1&1&0&0
    \end{pmatrix}} \\
    &=\frac{\det\begin{pmatrix}
        x_0^{\lambda_1+3}&x_1^{\lambda_1+3}&(\lambda_1+3)x_1^{\lambda_1+3}&(\lambda_1+3)(\lambda_1+2)x_1^{\lambda_1+3}\\
        x_0^{\lambda_2+2}&x_1^{\lambda_2+2}&(\lambda_2+2)x_1^{\lambda_2+2}&(\lambda_2+1)(\lambda_2+2)x_1^{\lambda_2+2}\\
        x_0&x_1&x_1&0\\
        1&1&0&0
    \end{pmatrix}}{\det\begin{pmatrix}
        x_0^{3}&x_1^{3}&3x_1^{3}&6x_1^3\\
        x_0^{2}&x_1^{2}&2x_1^{2}&2x_1^2\\
        x_0&x_1&x_1&0\\
        1&1&0&0
    \end{pmatrix}},
\end{align*}
where the equality follows as we multiply powers of $x_1$ on the 3rd and 4th column for both the numerator and denominator.

We now notice that in this final form, each row has a fixed degree. Thus, both the numerator and denominator evaluate to a homogeneous polynomial in $x_0,x_1$. We now expand out the determinant corresponding to the numerator:
\begin{align*}
    &\det\begin{pmatrix}
        x_0^{\lambda_1+3}&x_1^{\lambda_1+3}&(\lambda_1+3)x_1^{\lambda_1+3}&(\lambda_1+3)(\lambda_1+2)x_1^{\lambda_1+3}\\
        x_0^{\lambda_2+2}&x_1^{\lambda_2+2}&(\lambda_2+2)x_1^{\lambda_2+2}&(\lambda_2+1)(\lambda_2+2)x_1^{\lambda_2+2}\\
        x_0&x_1&x_1&0\\
        1&1&0&0
    \end{pmatrix} \\
    =& -(\lambda_2+1)(\lambda_2+2)x_0^{\lambda_1+3}x_1^{\lambda_2+3}+(\lambda_1+3)(\lambda_1+2)x_0^{\lambda_2+2}x_1^{\lambda_1+4}\pm O(n^3)x_0x_1^{n+5} \pm O(n^3)x_1^{n+6}
\end{align*}
Thinking of this now as a polynomial function, with $x_0,x_1\geq 0$, we have that the magnitude is at most $O(n^2)x_0^{\lambda_1+2}x_1^{\lambda_2+3}(x_0+x_1) + O(n^3)x_1^{n+5}(x_0+x_1)$

    Since the denominators are common across all pairs $\lambda_1,\lambda_2$, we can sum up the numerators weighted by $\dim(S_\lambda) \leq \binom n {\lambda_1}$. From the upper bound on the magnitude of the numerator, we get that the magnitude of the weighted sum of the numerators is upper bounded by 
\begin{align*}
    &\sum_{\lambda \vdash_2\, n}\dim(S_\lambda)\left[O(n^2)x_0^{\lambda_1+2}x_1^{\lambda_2+3}(x_0+x_1)+O(n^3)x_1^{n+5}(x_0+x_1)\right] \\
    \leq &\sum_{t=0}^{n}\binom {n}{t}\left[O(n^2)x_0^{t+2}x_1^{n-t+3}(x_0+x_1)+O(n^3)x_1^{n+5}(x_0+x_1)\right] \\
    = &~O(n^2) x_0^2x_1^3(x_0+x_1)^{n+1}+O(n^3)2^{n}x_1^{n+5}(x_0+x_1)
\end{align*}
    We can also expand the denominator out as follows:
    \[\det\begin{pmatrix}
        x_0^{3}&x_1^{3}&3x_1^{3}&6x_1^3\\
        x_0^{2}&x_1^{2}&2x_1^{2}&2x_1^2\\
        x_0&x_1&x_1&0\\
        1&1&0&0
    \end{pmatrix}=-2x_1^3x_0^3+6x_0^2x_1^4-6x_0x_1^5+2x_1^6=-2(x_0-x_1)^3x_1^3\]
Plugging in $x_0=(1-3p)$ and $x_1=p$, with $p<0.2$, we get that the magnitude of the numerator is of the form $O(n^2)(1-3p)^2p^3(1-2p)^{n+1} +O(n^3)(2p)^np^{5}(1-2p)$, and the denominator is of the form $-2(1-4p)^3p^3$. Hence, we finally have the relation
\begin{align*}
    \sum_{\lambda\vdash_2\, n}\dim(S_\lambda)\cdot \Tr[q_\lambda^4(\rho)] &\leq O(n^2)(1-3p)^2(1-4p)^{-3}(1-2p)^{n+1}+ O(n^3)(2p)^{n}p^2(1-2p)(1-4p)^{-3} \\
    &\leq O(n^2)(1-2p)^{n} \qedhere
\end{align*}
\end{proof}
\begin{lemma}\label{lem:wtvector}
    
    Consider the vector  $\ket{\mathsf{SYT}^i_\lambda} \otimes \ket{\mathsf{SSYT}^j_\lambda}$, where $\ket{\mathsf{SYT}^i_\lambda}$ is a Young-Yamanouchi basis element for $S_\lambda$, and $\ket{\mathsf{SSYT}^j_\lambda}$ is a Gelfand-Tsetlin basis element for $V^4_\lambda$ corresponding to a basis $\ket{b_1},\ket{b_2},\ket{b_3},\ket{b_4}$ for $\mathbb C^4$. Then, we have that
    \begin{align*}
        \ket{\mathsf{SYT}^i_\lambda} \otimes \ket{\mathsf{SSYT}^j_\lambda} &= \sum_{\pi \in S_n} \alpha_{\pi} U_{\pi}(\ket{b_1}^{\otimes w_1} \otimes \ket{b_2}^{\otimes w_2} \otimes \ket{b_3}^{\otimes w_3} \otimes \ket{b_4}^{\otimes w_4}),
    \end{align*}
where $U_{\pi}$ is the unitary matrix corresponding to the action of permutation $\pi$. 
\end{lemma}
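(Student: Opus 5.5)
The plan is to read $w_1,\dots,w_4$ as the \emph{content} (weight) of the semi-standard tableau $\mathsf{SSYT}^j_\lambda$, i.e.\ $w_k$ is the number of entries equal to $k$, so that $\sum_k w_k = n$. The claim then says that $\ket{\mathsf{SYT}^i_\lambda}\otimes\ket{\mathsf{SSYT}^j_\lambda}$ lies in the weight space of $(\mathbb C^4)^{\otimes n}$ of weight $w$. This weight space is by definition the span of all product vectors $\ket{b_{k_1}}\otimes\cdots\otimes\ket{b_{k_n}}$ that contain exactly $w_k$ copies of $\ket{b_k}$. Each such product vector is $U_\pi$ applied to the sorted vector $\ket{b_1}^{\otimes w_1}\otimes\cdots\otimes\ket{b_4}^{\otimes w_4}$ for some $\pi\in S_n$, which immediately gives the stated form $\sum_\pi \alpha_\pi U_\pi(\cdots)$. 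The $\alpha_\pi$ need not be unique, since distinct $\pi$ can give the same product vector.

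To place the vector in that weight space, I will use diagonal operators. Fix $D=\mathsf{diag}(t_1,t_2,t_3,t_4)$ in the basis $\{\ket{b_k}\}$, with $t_k$ generic positive reals chosen so that distinct exponent vectors $u\in\mathbb{Z}_{\ge0}^4$ with $\sum_k u_k=n$ give distinct monomials $t^u=\prod_k t_k^{u_k}$ (for example $t_k$ multiplicatively independent). On the tensor side, $D^{\otimes n}$ is diagonal in the product basis, and its eigenvalue on a product vector with content $u$ is $t^u$. Hence the $t^w$-eigenspace of $D^{\otimes n}$ is exactly the weight-$w$ space described above.

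On the Schur side, $D^{\otimes n}=\bigoplus_\lambda \mathbb I_{S_\lambda}\otimes q^4_\lambda(D)$. By the lemma quoted from Theorem 2.3 of \cite{Mol06}, $q^4_\lambda(D)$ is diagonal in the Gelfand--Tsetlin basis corresponding to $\{\ket{b_k}\}$. So $\ket{\mathsf{SYT}^i_\lambda}\otimes\ket{\mathsf{SSYT}^j_\lambda}$ is an eigenvector of $D^{\otimes n}$, and it remains to identify its eigenvalue as $t^{w}$, with $w$ the content of $\mathsf{SSYT}^j_\lambda$.

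This identification is the main step. My first approach is the standard Gelfand--Tsetlin fact: the basis vector attached to a pattern is a weight vector, and its $k$-th weight equals the difference of the row sums of consecutive rows of the pattern. Under the pattern–SSYT bijection, this difference is the number of entries equal to $k$, and it is precisely the statement that the diagonal entries in Molev's Theorem 2.3 are $\prod_k t_k^{w_k}$.

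If a self-contained argument is preferred, I would instead use a dimension/trace check. Every eigenvalue of $D^{\otimes n}$ has the form $t^u$, so each Gelfand--Tsetlin vector of $V^4_\lambda$ has some monomial eigenvalue. The trace of $q^4_\lambda(D)$ is the Schur polynomial $s_\lambda(t)=\sum_{T}t^{\mathrm{content}(T)}$, summed over SSYTs $T$ of shape $\lambda$. Comparing multisets of monomials, which is possible by genericity of $t$, matches the eigenvalues with SSYT contents. The remaining point is that the labelling of the Gelfand--Tsetlin basis by SSYTs respects this matching; I expect to have to cite \cite{Mol06} for that rather than derive it.

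Finally, since the eigenvalue is $t^w$ and $t$ is generic, the vector lies in the $t^w$-eigenspace, i.e.\ the weight-$w$ space, and expanding it in the product basis gives the claimed expression.
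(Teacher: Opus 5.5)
Your proof is correct and follows the same route as the paper. Both take an operator diagonal in the basis $\{\ket{b_k}\}$ with generic eigenvalues (the paper uses formal variables $m_1,\dots,m_4$, you use multiplicatively independent reals $t_k$). Both then use Molev's theorem to see that the Schur basis vector is an eigenvector of its $n$-th tensor power, note that the eigenvalue must be a monomial, and conclude that the vector lies in the eigenspace spanned by permuted copies of the sorted product vector.

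Your extra step identifying $w$ with the content of $\mathsf{SSYT}^j_\lambda$ is a correct standard fact. It is not needed for the lemma as stated, because any monomial eigenvalue suffices, as your fallback argument already observes.
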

\begin{proof}
    Consider the matrix $M \in \mathbb{C}^4$ such that $M = \mathsf{diag}(m_1, m_2, m_3, m_4)$ in the basis $B = \{\ket{b_1}, \ket{b_2}, \ket{b_3}, \ket{b_4}\}$, where we think of $m_1, m_2, m_3, m_4$ as formal variables. Consider the eigenvector $\ket{\mathsf{SYT}^i_\lambda} \otimes \ket{\mathsf{SSYT}^j_\lambda}$ of $M^{\otimes n}$ with respect to $(\mathbb{C}^4)^{\otimes n} \cong \bigoplus_{\lambda \vdash_4\, n} S_{\lambda} \otimes V^4_{\lambda}$, using the Young-Yamanouchi basis for $S_{\lambda}$ and the Gelfand-Tsetlin basis corresponding to $B$ for $V^4_{\lambda}$.
    
    We know that that the tensor power basis constructed using $\{\ket{b_1},\ket{b_2},\ket{b_3},\ket{b_4}\}$ also form eigenvectors for $M^{\otimes n}$, with the corresponding eigenvalues being $m_1^{w_1}m_2^{w_2}m_3^{w_3}m_4^{w_4}$.

    Since the eigenvalues of $M^{\otimes n}$ remain unchanged for any basis, we know that the eigenvalue for $\ket{\mathsf{SYT}^i_\lambda} \otimes \ket{\mathsf{SSYT}^j_\lambda}$ has to be of the form $m_1^{w_1}m_2^{w_2}m_3^{w_3}m_4^{w_4}$ for $w_1, w_2, w_3, w_4 \geq 0$. We also know that the eigenspace of $M^{\otimes n}$ corresponding to this eigenvalue is equal to the span of $\{U_{\pi}(\ket{b_1}^{\otimes w_1} \otimes \ket{b_2}^{\otimes w_2} \otimes \ket{b_3}^{\otimes w_3} \otimes \ket{b_4}^{\otimes w_4})\}_{\pi \in S_n}$. Therefore, since $\ket{\mathsf{SYT}^i_\lambda} \otimes \ket{\mathsf{SSYT}^j_\lambda}$ lies in this eigenspace, the claim follows.
\end{proof}
\begin{corollary}\label{cor:probwtvector}
    For any matrix $M\in \mathbb C^4$ diagonalised in some basis $\ket{b_1},\ket{b_1},\ket{b_3},\ket{b_4}\in\mathbb C^4$, with corresponding eigenvalues $\alpha_1,\alpha_2, \alpha_3, \alpha_4$, any Schur basis element $\ket{\mathsf{SYT}^i_\lambda} \otimes \ket{\mathsf{SSYT}^j_\lambda}$ is an eigenvector for $M^{\otimes n}$ with eigenvalues of the form $\alpha_1^{w_1}\alpha_2^{w_2}\alpha_3^{w_3}\alpha_4^{w_4}$, where $\mathbf w=(w_1,w_2,w_3,w_4)$ is the unique weight vector of the tensor power basis elements appearing in the expansion of $\ket{\mathsf{SYT}^i_\lambda} \otimes \ket{\mathsf{SSYT}^j_\lambda}$.
\end{corollary}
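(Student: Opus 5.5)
The plan is to deduce Corollary~\ref{cor:probwtvector} directly from Lemma~\ref{lem:wtvector}, applied with the eigenbasis $\{\ket{b_1},\ket{b_2},\ket{b_3},\ket{b_4}\}$ of $M$. That lemma writes
\[
\ket{\mathsf{SYT}^i_\lambda}\otimes\ket{\mathsf{SSYT}^j_\lambda}=\sum_{\pi\in S_n}\alpha_\pi U_\pi\bigl(\ket{b_1}^{\otimes w_1}\otimes\ket{b_2}^{\otimes w_2}\otimes\ket{b_3}^{\otimes w_3}\otimes\ket{b_4}^{\otimes w_4}\bigr)
\]
for some fixed weight vector $\mathbf w=(w_1,w_2,w_3,w_4)$, where the Gelfand--Tsetlin basis is taken relative to $B$.

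First, for each $\pi$, I will check that the vector $U_\pi(\ket{b_1}^{\otimes w_1}\otimes\cdots\otimes\ket{b_4}^{\otimes w_4})$ is a tensor product of the $\ket{b_k}$ in which exactly $w_k$ factors equal $\ket{b_k}$. Since $M^{\otimes n}$ acts factorwise, it multiplies each such vector by $\alpha_1^{w_1}\alpha_2^{w_2}\alpha_3^{w_3}\alpha_4^{w_4}$. Equivalently, $M^{\otimes n}$ commutes with $U_\pi$ by Schur--Weyl duality, so it suffices to check this for the unpermuted vector.

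By linearity, the Schur basis element is then an eigenvector of $M^{\otimes n}$ with eigenvalue $\prod_k\alpha_k^{w_k}$.

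For uniqueness of $\mathbf w$, I will use the fact that tensor product basis vectors with different weight vectors are distinct elements of the orthonormal product basis built from $B$. So the expansion of the Schur basis element in the product basis involves only basis vectors of the single weight $\mathbf w$ supplied by Lemma~\ref{lem:wtvector}. This $\mathbf w$ is therefore well defined: it is the common weight of every product basis vector with nonzero coefficient.

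The only subtle point is that Lemma~\ref{lem:wtvector} was proved with formal variables $m_k$, where distinct monomials give distinct eigenspaces. With specific $\alpha_k$, several weights may share an eigenvalue. However, the decomposition into a single weight class is a statement about the basis vector itself and does not depend on $M$. The conclusion therefore holds verbatim for any diagonal $M$, including degenerate ones such as $J$ with spectrum $(1-3p,p,p,p)$.
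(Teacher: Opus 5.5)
Your proposal is correct and takes the same route as the paper, which gives no separate proof and treats the corollary as an immediate consequence of Lemma~\ref{lem:wtvector}: every permuted product vector of weight $\mathbf w$ in the eigenbasis of $M$ is an eigenvector of $M^{\otimes n}$ with eigenvalue $\prod_k \alpha_k^{w_k}$, and linearity then gives the claim. Your remark about degenerate spectra is also right: the single-weight expansion is a property of the vector, not of $M$, so it still holds for spectra such as that of $J$ when you specialize from formal variables.
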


\begin{lemma}\label{lem:rank}
Let $\mathcal N$ be the depolarizing channel corresponding to Pauli error probabilities $(1-3p,p,p,p)$. Let $T$ be a subset of Pauli errors for $\mathcal N^{\otimes n}$ which is also permutation invariant, i.e, for each Pauli operator $P \in T$, any $P'$ obtained through permuting the tensor factors of $P$ is also in $T$. Then $T$ is spanned by a subset of the Schur basis Kraus operators such that only $O(n^2(1-2p)^n p_{\min}^{-1})$ of them do not annihilate the symmetric subspace, where $p_{\min}$ is the smallest probability of a Pauli error in $T$.
\end{lemma}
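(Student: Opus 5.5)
We prove Lemma \ref{lem:rank} by moving between the Pauli Kraus basis and the Schur basis through their common weight structure. In vectorized form, a Pauli Kraus operator $P=P_1\otimes\cdots\otimes P_n$ of $\mathcal N^{\otimes n}$ is, up to the factor $\sqrt{2^n\,\Pr[P]}$, the product Bell vector $\ket{\beta_{P_1}}\otimes\cdots\otimes\ket{\beta_{P_n}}$, where the Bell basis $\{\ket{\beta_I},\ket{\beta_X},\ket{\beta_Y},\ket{\beta_Z}\}$ diagonalizes $J$. Its weight vector $\mathbf w=(w_I,w_X,w_Y,w_Z)$ records how many tensor factors equal each Pauli. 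Since $T$ is permutation invariant, it is a union of full weight classes $W_{\mathbf w}$, the sets of all Paulis with weight $\mathbf w$. Each class spans the eigenspace $E_{\mathbf w}$ of $M^{\otimes n}$ for a generic diagonal $M$ in the Bell basis, namely the span of $\{U_\pi(\ket{\beta_I}^{\otimes w_I}\otimes\cdots\otimes\ket{\beta_Z}^{\otimes w_Z})\}_\pi$.

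By Lemma \ref{lem:wtvector} and Corollary \ref{cor:probwtvector}, each Schur basis vector $\ket{\mathsf{SYT}^i_\lambda}\ket{\mathsf{SSYT}^j_\lambda}$ lies in a unique $E_{\mathbf w}$. These vectors form an orthogonal basis of $(\mathbb C^4)^{\otimes n}$, so the ones with weight $\mathbf w$ form a basis of $E_{\mathbf w}$. We take $S$ to be the set of Schur basis Kraus operators whose weight lies in $\{\mathbf w : W_{\mathbf w}\subseteq T\}$. Then $S$ spans $T$: each Kraus vector is a nonzero rescaling of a basis vector, and the Choi eigenvalue on $E_{\mathbf w}$ is the constant $(1-3p)^{w_I}p^{n-w_I}>0$.

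By Lemma \ref{lem:annihilate} and the remark after it, an element of $S$ fails to annihilate the symmetric subspace only when $\lambda$ has at most two rows, that is $\lambda_3=0$. So the task reduces to counting the elements of $S$ with $\lambda\vdash_2 n$. We do this by weighting with the eigenvalue measure. Each Schur basis vector in $S$ carries Choi eigenvalue $\gamma=\Pr[P]$ for any $P\in W_{\mathbf w}$, and $\Pr[P]\ge p_{\min}$ because $W_{\mathbf w}\subseteq T$. Hence
\[
\#\{\text{non-annihilating elements of } S\}\;\le\; p_{\min}^{-1}\sum_{\lambda\vdash_2 n}\ \sum_{i,j}\gamma^{i,j}_\lambda \;=\; p_{\min}^{-1}\sum_{\lambda\vdash_2 n}\dim(S_\lambda)\,\Tr[q^4_\lambda(J)].
\]
The right-hand side is $p_{\min}^{-1}$ times the probability that weak Schur sampling on $J^{\otimes n}$ returns a partition with at most two rows. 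Since $J$ has spectrum $(1-3p,p,p,p)$, Lemma \ref{lem:two_row_prob} bounds this probability by $O(n^2)(1-2p)^n$, which gives the claim.

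The main obstacle is the spanning step. It needs a Schur basis that diagonalizes $J^{\otimes n}$ with Gelfand–Tsetlin vectors built from the Bell basis, so that each basis vector has a single well-defined weight. It also needs the eigenvalue $\gamma$ to agree exactly with the Pauli probability of that weight class, which is what makes $p_{\min}$ the right normalization. Lemma \ref{lem:two_row_prob} assumes $p<0.2$, which covers the regime of interest.
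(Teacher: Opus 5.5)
Your proposal is correct and follows essentially the same route as the paper. Your set $S$ is the paper's $K_T=\bigcup_{\mathbf w}K_{\mathbf w}$, spanning goes through the weight eigenspaces via Lemma \ref{lem:wtvector} and Corollary \ref{cor:probwtvector}, and the non-annihilating (two-row) elements are counted by $p_{\min}^{-1}$ times the probability bound of Lemma \ref{lem:two_row_prob}; your explicit argument that the Schur vectors of weight $\mathbf w$ form a basis of $E_{\mathbf w}$, and your remark on the $p<0.2$ hypothesis, state more precisely what the paper leaves terse.
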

\begin{proof}
    Fix a weight distribution of $I,X,Y,Z$ errors, lets say $\mathbf w=(w_I,w_X,w_Y,w_Z)$ respectively. Then, consider the set $P_{\mathbf w}$ of Pauli operators that have this weight distribution. We know that under the probability distribution of Pauli errors induced by $\mathcal N^{\otimes n}$, all elements of $P_{\mathbf w}$ are equiprobable. 

    Furthermore, the set of Schur basis Kraus operators lying in the span of $P_{\mathbf w}$ (which conversely also span $P_{\mathbf w})$ also has the exact same probabilities with respect to the measure induced by Schur basis diagonalisation of the Choi matrix for $\mathcal N^{\otimes n}$.  This follows from Corollary \ref{cor:probwtvector} and Lemma \ref{lem:wtvector}  . Call this set $K_{\mathbf w}$. Since $T$ is permutation invariant, we have a set of weight vectors $w_T$ such that $T=\cup_{\mathbf w\in w_T} P_\mathbf{w}$. Define another set $K_T=\cup_{\mathbf w\in w_T} K_{\mathbf w} $.

    Then, we have $\mathrm{span}(K_T)=\mathrm{span}(T)$ by Lemma \ref{lem:wtvector}, and $|K_T|=|T|$ by counting the dimensions (using the fact that $T$ is permutation invariant). Furthermore, with respect to the measure induced by Schur basis diagonalisation of the Choi matrix for $\mathcal N^{\otimes n}$, which is the  measure of the Choi Matrix $J^{\otimes n}$, we have that probability of sampling something from $K_T$ is equal to the probability of sampling something in $T$ under the Pauli error distribution induced by $\mathcal N^{\otimes n}$, since the vectorizations of elements from both $K_\mathbf w$ and $P_{\mathbf w}$ lie in the same eigenspace for the Choi Matrix $J^{\otimes n}$ . For simplicity, we will refer to the the probability measure over the set of Schur Basis Kraus operators as defined by the eigenvalues of $J^{\otimes n}$ as $K\sim J^{\otimes n}$. We also have that $p_{\min}$ is the smallest probability for any element of $K_T$ from Corollary \ref{cor:probwtvector}.

    Our goal now is to bound the number of elements in $K_T$ that correspond to Schur basis elements with at most $2$ rows. Assuming the number of such elements is $N_K$, we have that
    \begin{align*}
        N_K\cdot p_{\min}\leq O(n^2(1-2p)^n)
    \end{align*}
    where the upper bound is the overall probability of sampling a Schur Basis element with at most 2 rows with the distribution $K\sim J^{\otimes n}$ using Lemma \ref{lem:two_row_prob}. Thus, we have that
    \begin{align*}
        N_K\leq O(n^2(1-2p)^np_{\min}^{-1}). &\qedhere
    \end{align*}
\end{proof}

\end{document}